\newtheorem{theorem}{Theorem}
\newtheorem{conjecture}{Conjecture}
\newtheorem{corollary}{Corollary}
\newtheorem{definition}{Definition}
\newtheorem{lemma}{Lemma}
\newtheorem{property}{Property}
\newtheorem{remark}{Remark}
\newenvironment{proof}[1][Proof.]{\begin{trivlist}
\item[\hskip \labelsep {\bfseries #1}]}{\end{trivlist}}
\newenvironment{acknowledgement}[1][Acknowledgement]{\begin{trivlist}
\item[\hskip \labelsep {\bfseries #1}]}{\end{trivlist}}
\newcommand{\AmS}{{\protect\the\textfont2
  A\kern-.1667em\lower.5ex\hbox{M}\kern-.125emS}}
\title{On disjoint matchings in cubic graphs: maximum $2$- and $3$-edge-colorable subgraphs}
\author{Davit Aslanyan \address[MCSD]{Department of Informatics and Applied Mathematics,\\
Yerevan State University, Yerevan, 0025, Armenia}\thanks{email: david@joomag.com}, 
Vahan V. Mkrtchyan\addressmark[MCSD]%
\address{Institute for Informatics and Automation Problems,\\
National Academy of Sciences of Republic of Armenia, 0014,
Armenia}
\thanks{email: vahanmkrtchyan2002@\{ysu.am, ipia.sci.am,
yahoo.com\}},
        Samvel S. Petrosyan\addressmark[MCSD]\thanks{email: samvelpetrosyan2008@yahoo.com},
                and
        Gagik N. Vardanyan \addressmark[MCSD]\thanks{email: vgagik@gmail.com }}
\begin{document}

\maketitle

\begin{abstract}
We show that any $2-$factor of a cubic graph can be extended to a
maximum $3-$edge-colorable subgraph. We also show that the sum of sizes of
maximum $2-$ and $3-$edge-colorable subgraphs of a cubic graph is at
least twice of its number of vertices. Finally, for a cubic graph $G$, consider the pairs of edge-disjoint matchings whose union consists of as many edges as possible. Let $H$ be the
largest matching among such pairs. Let $M$ be a maximum matching of $G$. We show that $9/8$ is a tight upper bound for $|M|/|H|$.
\end{abstract}

\section{Introduction}

We consider finite undirected graphs that do not contain loops. Graphs may contain multiple edges. For a graph $G$ and a positive integer $k$ define

\begin{center}
$B_{k}(G)\equiv \{(H_{1},...,H_{k}):H_{1},...,H_{k}$ are pairwise
edge-disjoint matchings of $G\}$,
\end{center}

and let

\begin{center}
$\nu _{k}(G)\equiv \max \{\left\vert H_{1}\right\vert
+...+\left\vert H_{k}\right\vert :(H_{1},...,H_{k})\in B_{k}(G)\}$.
\end{center}

A subgraph $H$ of $G$ is called maximum $k$-edge-colorable, if it is $k$-edge-colorable
and contains exactly $\nu _{k}(G)$ edges.

Define:

\begin{center}
$\alpha _{k}(G)\equiv \max \{\left\vert H_{1}\right\vert
,...,\left\vert H_{k}\right\vert :$ $(H_{1},...,H_{k})\in B_{k}(G)$ and
$\left\vert H_{1}\right\vert +...+\left\vert H_{k}\right\vert =\nu
_{k}(G)\}$.
\end{center}

If $\nu(G) $ denotes the cardinality of the largest matching of $G$,
then it
is clear that $\alpha _{k}(G)\leq \nu(G) $ for all $G$ and $k$. Moreover, $%
\nu _{k}(G)=|E(G)|$ for all $k\geq \chi ^{\prime
}(G)$, where $\chi ^{\prime}(G)$ is the chromatic index of $G$. Also note that $\nu _{1}(G)$ and $\alpha _{1}(G)$ coincide with
$\nu (G)$.

Recall that a matching of $G$ is maximum, if it contains $\nu (G)$ edges, and is maximal if it is not a subset of a larger matching. In contrast with the theory of $2$-matchings, where every graph $G$
admits a maximum $2$-matching that includes a maximum matching
\cite{Lov}, there are graphs that do not have a maximum $2$-edge-colorable subgraph that includes a maximum matching.

The following is the best result that can be stated about the ratio
$\nu(G)
/\alpha _{2}(G)$ for any simple graph $G$ (see \cite{FiveFourth}):%
\begin{equation*}
1\leq \nu(G) /\alpha _{2}(G)\leq 5/4.
\end{equation*}

Very deep characterization of simple graphs $G$ satisfying $\nu(G) /\alpha
_{2}(G)=5/4$ is given in \cite{FivefourthCharacter}.

Also note that by Mkrtchyan's result \cite{MPP0-1},
reformulated as in \cite{HararyPlummer}, if $G$ is a matching
covered tree, then $\alpha
_{2}(G)=\nu(G) $. Note that a graph is said to be matching covered (see \cite%
{Perfect}), if its every edge belongs to a maximum matching (not
necessarily a perfect matching as it is usually defined, see e.g.
\cite{Lov}).

In this paper, we show that any $1$- and $2$-factor of a cubic graph can be extended to a maximum $3$-edge-colorable subgraph. We also show that $\nu_{2}(G)+\nu_{3}(G)\geq 2|V(G)|$ for any cubic graph $G$. Finally, we show that $9/8$ is a tight upper bound for the ratio $\nu(G) /\alpha_{2}(G)$ in the class of cubic graphs $G$.

Terms and concepts that we do not define can be found in \cite{Lov,West}.

\section{The main results}

We begin with a theorem that describes the structure of the edges that do not belong to a maximum $3$-edge-colorable subgraph of a cubic graph. 

\begin{theorem}\label{ComplementMatching}Let $H$ be a maximum $3-$edge-colorable
subgraph of a cubic graph $G$. Then $E(G)\backslash E(H)$ is a
matching.
\end{theorem}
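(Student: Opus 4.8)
The plan is to reformulate the statement degree‑theoretically. Since $G$ is cubic, writing $R = E(G)\setminus E(H)$, every vertex $v$ satisfies $d_R(v) = 3 - d_H(v)$, so $R$ is a matching if and only if $d_H(v)\ge 2$ for every $v$. Hence I would argue by contradiction: assume some vertex $v$ has $d_H(v)\le 1$ and produce a $3$‑edge‑colorable subgraph with more than $\nu_3(G)$ edges, contradicting maximality. Throughout I would fix a proper $3$‑edge‑coloring of $H$ with colors $\{1,2,3\}$ and, for each vertex $x$, let $M(x)$ denote the set of colors \emph{missing} at $x$, so that $|M(x)| = 3 - d_H(x)$.

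The basic augmenting observation is that for every edge $xy\in R$ one must have $M(x)\cap M(y)=\emptyset$; otherwise a common missing color could be assigned to $xy$, yielding a proper $3$‑edge‑coloring of $H+xy$ and contradicting maximality. Note that $d_H(x),d_H(y)\le 2$ since $xy\in R$ and $G$ is cubic, so $H+xy$ still has maximum degree at most $3$. I would then split into two cases. If $d_H(v)=0$, then $M(v)=\{1,2,3\}$; taking any $R$‑edge $vu$, cubicity gives $d_H(u)\le 2$, so $M(u)\ne\emptyset$, and any color of $M(u)$ then lies in $M(v)$, contradicting the observation. If $d_H(v)=1$, say $v$ is missing $\alpha,\beta$ and uses $\gamma$ on its unique edge $vw$; here $v$ has exactly two $R$‑edges $vu_1, vu_2$. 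The observation forces $\alpha,\beta\notin M(u_i)$, and since $d_H(u_i)\le 2$ this pins down $d_H(u_i)=2$ and $M(u_i)=\{\gamma\}$ for each $i$.

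This last point — that cubicity makes each $R$‑neighbour have $H$‑degree exactly $2$ — is the crucial structural fact, since it rules out the troublesome case in which a neighbour carries all three colors and no color can be freed by a single Kempe swap. Concretely, I would consider the subgraph $F$ of $H$ consisting of the edges colored $\alpha$ or $\gamma$, whose components are paths and even cycles. Here $v$ (missing $\alpha$, using $\gamma$) and each $u_i$ (using $\alpha$, missing $\gamma$) are all degree‑$1$ vertices of $F$, hence endpoints of paths.

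Finally, the path of $F$ ending at $v$ has a single other endpoint, so since $u_1\ne u_2$ at least one of them — say $u_2$ — is an endpoint of a \emph{different} $F$‑path $Q$, and because the components of $F$ are vertex‑disjoint, $Q$ avoids $v$. Swapping $\alpha$ and $\gamma$ along $Q$ keeps the coloring proper, frees $\alpha$ at $u_2$, and leaves $v$ untouched, so $\alpha$ remains missing at $v$; now $vu_2$ may be colored $\alpha$, producing a proper $3$‑edge‑coloring of $H+vu_2$ and the desired contradiction. I expect the only delicate point to be exactly the verification that $Q$ avoids $v$, which is precisely where having two $R$‑edges at $v$, together with $v$ being a path‑endpoint of $F$, is essential.
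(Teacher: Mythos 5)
Your proof is correct and follows essentially the same route as the paper: your disjoint-missing-colors observation is exactly the paper's maximality argument, and your swap of $\alpha$ and $\gamma$ along the path component $Q$ avoiding $v$ is precisely the paper's shift of colors on the maximal $\alpha-\gamma$ alternating path $P_i$ with $u_1\notin V(P_i)$, your $v$ playing the role of the paper's middle vertex $u_1$, with your degree reformulation ($R$ is a matching iff $d_H\geq 2$ everywhere) neatly absorbing the paper's separate preliminary step that $G\backslash E(H)$ contains no path of length three. One word is needed where you assert $u_1\neq u_2$, since the paper allows multiple edges (its Case 1): if the two $R$-edges at $v$ were parallel edges to a single vertex $u$, then $d_H(u)\leq 1$, contradicting the fact $d_H(u_i)=2$ you have just pinned down, so the assertion does follow from your own deduction but deserves that one line.
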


\begin{proof} 

To complete the proof of the theorem, we need to verify the absence
of adjacent edges in $G\backslash E(H)$.

Suppose that $(u_0,u_1),(u_1,u_2)\in E(G)\backslash E(H)$. Let $C(u_0),C(u_1),C(u_2)$ denote the colors of the edges incident
to the vertices $u_0,u_1,u_2$, respectively. We need to consider two
cases:

Case 1: $u_0=u_2$, that is, $(u_0,u_1)$ is a multiple edge. Note
that $|C(u_0)|\leq1,|C(u_1)|\leq1$, thus there is $\alpha \in
\{1,2,3\}$ with $\alpha \notin C(u_0)\cup C(u_1)$. Now, if we color
one of edges connecting $u_0$ and $u_1$ with color $\alpha$, then we
would get a proper $3-$edge-coloring of the subgraph $H\cup
\{(u_0,u_1)\}$, contradicting the maximality of $H$.

Case 2: $u_0\neq u_2$. Note that
$|C(u_0)|\leq2,|C(u_1)|\leq1,|C(u_2)|\leq2$. It is easy to see that the maximality of $H$ implies that
\begin{gather*}
C(u_0)\cup C(u_1)=\{1,2,3\} \textrm{ and }C(u_1)\cup
C(u_2)=\{1,2,3\},
\end{gather*}
thus $|C(u_0)|=2,|C(u_1)|=1,|C(u_2)|=2$ and $C(u_0)=C(u_2)$. Suppose that $C(u_0)=C(u_2)=\{\alpha, \beta\}$
and $C(u_1)=\{\gamma\}$. Consider the maximal $\alpha - \gamma$
alternating paths $P_0,P_1,P_2$, starting from vertices
$u_0,u_1,u_2$, respectively. Note that there is $i\in \{0,2\}$ such
that $u_1\notin V(P_i)$. Now, shift the colors on the path $P_i$ to
obtain a new coloring of the maximum $3-$edge-colorable subgraph
$H$, where the color $\alpha$ is absent in both of vertices $u_i$
and $u_1$. Now, if we color the edge $(u_1,u_i)$ with color
$\alpha$, then we would get a proper $3-$edge-coloring of the
subgraph $H\cup \{(u_1,u_i)\}$, contradicting the maximality of $H$.
The proof of the theorem \ref{ComplementMatching} is completed.
\end{proof}

It is not always possible to extend a $1$-factor (and maximum matchings as well \cite{AlbHaas2}) to a maximum $2$-edge-colorable
subgraph of a cubic graph. Nevertheless, the following is true:

\begin{theorem}\label{1factorExtension}Any $1-$factor of a cubic graph $G$ can be extended
to a maximum $3-$edge-colorable subgraph of $G$.
\end{theorem}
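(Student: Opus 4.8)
The plan is to argue by an extremal choice together with a Kempe-chain exchange that moves an edge of $F$ into a maximum $3$-edge-colorable subgraph. Among all maximum $3$-edge-colorable subgraphs of $G$, I would choose one, say $H$, for which $|E(H)\cap F|$ is as large as possible, and fix a proper $3$-edge-coloring of $H$ with colors $\{1,2,3\}$. By Theorem \ref{ComplementMatching}, the set $M=E(G)\backslash E(H)$ is a matching, so every vertex is incident to at most one edge of $M$. The goal is to show $M\cap F=\emptyset$, which is exactly the statement $F\subseteq E(H)$. I would assume the contrary and pick an edge $e=(u,v)\in M\cap F$; since $G$ is cubic and $M$ is a matching, both $u$ and $v$ have degree $2$ in $H$, so each misses exactly one color, say $\alpha$ at $u$ and $\beta$ at $v$. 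As in the proof of Theorem \ref{ComplementMatching}, maximality of $|E(H)|$ forces $\alpha\neq\beta$, since otherwise $e$ could be colored with the common missing color.

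Next I would locate the alternating path that obstructs the insertion of $e$. Consider the maximal $\alpha$-$\beta$ alternating path (Kempe chain) $P$ through $u$; as $u$ carries a $\beta$-edge but no $\alpha$-edge, $u$ is an endpoint of $P$. If $P$ did not reach $v$, then swapping the two colors along $P$ would make $u$ miss $\beta$ while $v$ still misses $\beta$, allowing $e$ to be colored $\beta$ and contradicting the maximality of $|E(H)|$. Hence $P$ must terminate at $v$, and since $v$ misses $\beta$ it can only be the other endpoint of $P$ (an internal vertex would meet both colors). Thus $P$ is an $\alpha$-$\beta$ path from $u$ to $v$. Its first edge at $u$ is colored $\beta$ and its last edge at $v$ is colored $\alpha$, so $P$ cannot have length $1$; it has at least two edges.

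The decisive step is the exchange that strictly increases $|E(H)\cap F|$. Because $F$ is a matching, no two consecutive edges of $P$ both lie in $F$; as $P$ has length at least two, it therefore contains an edge $f\in E(H)\backslash F$. I would remove $f$, breaking $P$ into a sub-path $P_u$ containing $u$ and a sub-path $P_v$ containing $v$, and then swap the colors $\alpha$ and $\beta$ along $P_u$ only. After this swap $u$ misses $\beta$, while $v$ (untouched) still misses $\beta$, and one checks routinely that the recoloring remains proper at the endpoint of $P_u$ incident to the removed edge $f$, since the third color there is unaffected. Coloring $e$ with $\beta$ then yields a proper $3$-edge-coloring of $H'=(H\backslash\{f\})\cup\{e\}$ with $|E(H')|=|E(H)|=\nu_3(G)$, so $H'$ is again a maximum $3$-edge-colorable subgraph; but $e\in F$ has been added and $f\notin F$ removed, so $|E(H')\cap F|>|E(H)\cap F|$, contradicting the choice of $H$. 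This contradiction gives $M\cap F=\emptyset$ and hence $F\subseteq E(H)$.

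\textbf{Main obstacle.} The delicate point is guaranteeing that the alternating path $P$ contains an edge of $E(H)\backslash F$ that we are free to delete; this is exactly where the hypothesis that $F$ is a matching enters, preventing two consecutive edges of $P$ from both belonging to $F$. A secondary care point is verifying that the partial Kempe swap on $P_u$ together with the deletion of $f$ leaves a proper coloring at the junction vertex, which I expect to be straightforward once the local color pattern is written out.
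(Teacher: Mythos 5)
Your proof is correct --- every step checks out, including the parity argument that the chain $P$ has length at least two, the guarantee (from $F$ being a matching) that $P$ carries an edge $f\notin F$, and the properness of the partial swap on $P_u$ at the junction vertex where $f$ was removed --- but it takes a noticeably heavier route than the paper's. Both arguments share the same extremal frame: choose a maximum $3$-edge-colorable $H$ with $|E(F)\cap E(H)|$ maximal, invoke Theorem \ref{ComplementMatching} so that the endpoints $u,v$ of the offending edge $e$ each have degree two in $H$ and miss exactly one color, and use maximality of $|E(H)|$ to force $C(u)\cup C(v)=\{1,2,3\}$. The paper then finishes with no Kempe chain at all: it picks $\alpha\in C(u)\setminus C(v)$ (in your notation this is $\beta$, the color missing at $v$ but present at $u$), lets $e'$ be the edge at $u$ colored $\alpha$, and sets $H'=(H\setminus\{e'\})\cup\{e\}$ with $e$ colored $\alpha$; this is proper because deleting $e'$ removes the unique $\alpha$-edge at $u$ while $v$ never had one. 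The exchangeable edge outside $F$ that you work to locate along the chain is immediate there: $e'$ is incident to $u$, and $u$ is already covered by $e\in F$, so $e'\notin F$ precisely because $F$ is a matching --- the same hypothesis you invoke, but localized at $u$ rather than spread along $P$. Indeed your construction collapses to the paper's if you simply choose $f$ to be the first edge of $P$ (the $\beta$-edge at $u$), which is automatically outside $F$ for this reason; then $P_u$ is a single vertex and no swap is needed, so the chain $P$, the argument that it must run from $u$ to $v$, and the partial recoloring are correct but redundant. What the chain-based viewpoint buys is not needed here, though alternating-path shifts of exactly this flavor do become essential in the paper's proofs of Theorem \ref{ComplementMatching} (Case 2) and Theorem \ref{2factorExtension}, where the exchange must be performed at a vertex where no single-edge deletion suffices.
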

\begin{proof}For a $1-$factor $F$ of $G$, choose a maximum $3-$edge-colorable
subgraph $H$ of $G$ with $|E(F)\cap E(H)|$ is maximum.

Let us show that $E(F)\subseteq E(H)$. On the opposite assumption,
consider an edge $e=(u,v)\in E(F)\backslash E(H)$ and assume
that $H$ is properly colored with colors $\{1,2,3\}$. Due to theorem
\ref{ComplementMatching}, the edges adjacent to $e$ belong to
$H$. Let $C(u)$ and $C(v)$ denote the colors of edges that are
incident to $u$ and $v$, respectively. Note that the maximality of
$H$ implies that
\begin{equation*}
|C(u)\cap C(v)|=1\textrm{ and }C(u)\cup C(v)=\{1,2,3\}.
\end{equation*}
Choose $\alpha \in C(u)\backslash C(v)$. Consider the subgraph
$H'=(H\backslash \{e'\})\cup \{e\}$, where $e'$ is the edge that is
incident to $u$ and is colored by $\alpha$. Note that $H'$ is a
maximum $3-$edge-colorable subgraph of $G$ with
\begin{equation*}
|E(F)\cap E(H)|<|E(F)\cap E(H')|.
\end{equation*}contradicting the choice of $H$. The proof of the
theorem \ref{1factorExtension} is completed.
\end{proof}

Next, we prove a result which claims that the uncolored edges with
respect to a maximum $3-$edge-colorable subgraph of $G$ always can
be "left" in a given $1-$factor, or, equivalently, any $2-$factor of
a cubic graph $G$ can also be extended to a maximum $3-$edge-colorable
subgraph of $G$.

\begin{theorem}\label{2factorExtension}Let $F$ be any $1-$factor of a cubic graph $G$, and let $\bar{F}$
be the complementary $2-$factor of $F$. Then there is a maximum
$3-$edge-colorable subgraph $H$ of $G$, such that:
\begin{enumerate}
\item[(a)] $E(H)\cup E(F)=E(G);$

\item[(b)] $E(\bar{F})\subseteq E(H).$

\end{enumerate}
\end{theorem}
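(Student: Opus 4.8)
The plan is to reduce to a single containment. Since $E(G)=E(F)\sqcup E(\bar F)$, the uncolored set $E(G)\setminus E(H)$ lies inside $E(F)$ exactly when $E(\bar F)\subseteq E(H)$, so (a) and (b) are equivalent and it suffices to produce a maximum $3$-edge-colorable subgraph whose uncolored matching avoids $\bar F$. Among all maximum $3$-edge-colorable subgraphs I would choose $H$ maximizing $|E(H)\cap E(\bar F)|$; because every maximum subgraph has the same number $\nu_3(G)$ of edges, this is the same as maximizing the number of uncolored edges that lie in $F$. I then assume for contradiction that some $e=(u,v)\in E(\bar F)$ is uncolored and derive an $H$ with strictly larger overlap with $\bar F$.

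Fix a proper coloring of $H$ with $\{1,2,3\}$ and set $C(u),C(v)$ to be the colors at $u,v$. By Theorem~\ref{ComplementMatching} the uncolored set is a matching, so the two other edges at each of $u,v$ lie in $H$ and $|C(u)|=|C(v)|=2$. Exactly as in the derivation of (\ref{unionColors}) and in Theorem~\ref{1factorExtension}, maximality of $H$ forces $C(u)\cup C(v)=\{1,2,3\}$ and $|C(u)\cap C(v)|=1$; write $\beta$ for the common color and $\alpha_u\in C(u)\setminus C(v)$, $\alpha_v\in C(v)\setminus C(u)$ for the private ones. At $u$ the incident edges are the unique $F$-edge $f_u$, the second $\bar F$-edge $g_u$, and $e$, and similarly $f_v,g_v,e$ at $v$. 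The easy case is when a private color sits on an $F$-edge, say $f_u$ has color $\alpha_u$: then $H'=(H\setminus\{f_u\})\cup\{e\}$, with $e$ colored $\alpha_u$, is again maximum and trades an $F$-edge out for the $\bar F$-edge $e$, so $|E(H')\cap E(\bar F)|>|E(H)\cap E(\bar F)|$, a contradiction. Hence I may assume $\alpha_u,\alpha_v$ sit on $g_u,g_v$ while $f_u,f_v$ both carry $\beta$.

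Now I would run a Kempe argument on the two private colors. Let $P_u,P_v$ be the maximal $\{\alpha_u,\alpha_v\}$-alternating paths issuing from $u$ along $g_u$ and from $v$ along $g_v$. If $P_u\neq P_v$ then $v\notin P_u$, so swapping colors on $P_u$ replaces $\alpha_u$ by $\alpha_v$ on $g_u$ without touching $v$; afterwards $\alpha_u$ is missing at both $u$ and $v$, so $e$ may be colored $\alpha_u$, extending $H$ and contradicting its maximality. Therefore $P_u=P_v=:P$, and $P$ together with $e$ forms a cycle $D$, which is odd because the alternation starts with $\alpha_u$ on $g_u$ and ends with $\alpha_v$ on $g_v$, forcing $P$ to have even length.

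The configuration $P_u=P_v$ is the heart of the matter and the step I expect to be the main obstacle, since the direct swaps are now blocked. Here I would recolor the whole odd cycle $D$ at once: delete $f_u$ and seek a proper $3$-edge-coloring of $D$ that colors $e$ and uses $\beta$ exactly once, placed on an edge both of whose endpoints permit $\beta$ (for instance $g_u$, once $u$'s $F$-edge is removed, provided the next vertex of $P$ has an uncolored third edge, or some interior edge of $P$ between two vertices with uncolored third edges). Any such placement yields a maximum $3$-edge-colorable subgraph with strictly larger overlap with $\bar F$, the desired contradiction, and it is consistent with the rest of $H$ because the only color forbidden on a $D$-edge at any vertex is the $\beta$ on an external edge. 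The delicate point is to guarantee that a legal position for the single $\beta$-edge always exists; the dangerous sub-case is when every vertex of $D$ already meets a $\beta$-colored edge, so that $\beta$ is forbidden everywhere along $D$. I would rule this out by using that both the uncolored matching and the $\beta$-color class are matchings and that $D$ is odd, combining the two symmetric deletions (of $f_u$ and of $f_v$) and, if necessary, a secondary extremal choice of $H$ among those maximizing $|E(H)\cap E(\bar F)|$, to force a vertex of $D$ that admits the recoloring.
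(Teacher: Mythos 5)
Your setup is sound and matches the paper's: the extremal choice (maximizing $|E(H)\cap E(\bar F)|$ is indeed the same as the paper's minimization of $|E(F)\cap E(H)|$), the reduction of (a) and (b) to a single containment, the use of Theorem~\ref{ComplementMatching}, the color analysis at $u,v$, the easy swap when a private color sits on an $F$-edge, and the Kempe-chain derivation of the odd cycle $D$ (which re-proves, rather than cites, the Steffen-type cycle structure the paper invokes). But the ``dangerous sub-case'' you flag --- every vertex of $D$ meets an external $\beta$-edge --- is precisely where the whole difficulty lives, and your proposed remedies demonstrably fail rather than being merely delicate. First, after deleting $f_u$ the single vertex $u$ is the only vertex of $D$ missing $\beta$; a $\beta$-colored edge of $D$ needs \emph{both} endpoints free of external $\beta$, and both $D$-neighbors of $u$ still see $\beta$, so no legal position for the single $\beta$-edge exists at all --- this is impossible, not hard to guarantee. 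Second, the ``two symmetric deletions'' of $f_u$ and $f_v$ remove two colored edges while the recoloring of $D$ can return at most one edge to the colored set (only $e$ was uncolored on $D$), so the resulting subgraph has $\nu_3(G)-1$ edges and is not maximum. Third, the ``secondary extremal choice'' is not an argument: you give no quantity whose extremality would rule the configuration out.

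The paper's proof of Theorem~\ref{2factorExtension} shows what is actually needed in this configuration, and it consists of two moves absent from your plan. (i) If the odd cycle contains an $F$-edge $g$, one does not delete anything: one \emph{rotates} the uncolored edge around the cycle, leaving $g$ uncolored and coloring the even path $C_e-g$ alternately with the two cycle colors; the edge count is preserved and $|E(F)\cap E(H)|$ drops, since $e\notin F$ and $g\in F$. Your scheme never considers moving the uncolored edge onto the cycle itself. (ii) If the cycle contains no $F$-edge, then every pendant edge of the cycle is the $F$-edge of its cycle vertex (colored with the common color), and the paper uncolors the \emph{entire} cycle and uses a parity argument: the alternating chains in the common color and one private color, started from the $2l+1$ cycle vertices along their pendants, pair up those cycle vertices at which they terminate, so by oddness some chain $P_w$ escapes the cycle for good; shifting $P_w$ and uncoloring the pendant $F$-edge at $w$ then permits a recoloring of the cycle that leaves an $F$-edge as the unique new uncolored edge. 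This escape-chain-by-parity idea is the genuine content of the theorem beyond Theorem~\ref{1factorExtension}, and it is the ingredient your proposal is missing; without it, the case in which all pendants carry $\beta$ and $E(D)\cap E(F)=\emptyset$ remains open.
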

\begin{proof} Note that (b) follows from (a), thus we will only prove (a).

For a given $1-$factor $F$ of a cubic graph $G$, consider a maximum
$3-$edge-colorable subgraph $H$ of $G$ such that $|E(F)\cap E(H)|$ is minimum.

To show that $E(H)\cup E(F)=E(G)$, we only need to verify that
$E(F)\cup E(H)\supseteq E(G)$. Assume that there is $e=(u,v)\in E(G)$ such
that $e$ belongs to none of $F$ and $H$, and assume that
$H$ is properly colored with colors $\{1,2,3\}$.

Consider the edges adjacent to $e$. Theorem
\ref{ComplementMatching} implies that these edges belong to $H$. If $C(u)$ and
$C(v)$ denote the colors of edges that are incident to $u$ and $v$,
respectively, then the maximality of $H$ implies that
\begin{equation*}
|C(u)\cap C(v)|=1\textrm{ and }C(u)\cup C(v)=\{1,2,3\}.
\end{equation*}
Suppose that
\begin{equation*}
\{\alpha\}=C(u)\cap C(v), C(u)=\{\alpha,\beta \} \textrm{ and }
C(v)=\{\alpha,\gamma\}.
\end{equation*}

Consider the alternating path $P_e$ with edges of colors $\{\beta, \gamma\}$ that starts from the vertex $u$. Note that the maximality of $H$ implies that $P_e$ must terminate at vertex $v$. Thus $P_e$ is an even path, which together with the edge $e$, forms an odd cycle $C_e$. Let us show that the edges that
are incident to a vertex of $C_e$ and do not lie on $C_e$ must be colored, and therefore they are colored by $\alpha$. If some edge of this kind had no color, then we could have shifted the colors on the cycle $C_e$ and get a maximum $3$-edge-colorable subgraph with two adjacent uncolored edges, contradicting theorem \ref{ComplementMatching}.

To complete the proof of the theorem, we need to consider two cases:

Case 1:$E(C_e)\cap E(F)\neq \emptyset$.

Let $f\in E(C_e)\cap E(F)$. Consider a proper partial
$3-$edge-coloring of the graph $G$ obtained from the coloring of $H$
as follows: $f$ is left uncolored, the edges of the even path
$C_e-f$ are colored $\beta$ and $\gamma$, alternatively, the colors
of the rest of edges are left unchanged. Note that the new partial
$3-$edge-coloring corresponds to a maximum $3-$edge-colorable
subgraph $H'$ of $G$ with
\begin{equation*}
|E(F)\cap E(H')|<|E(F)\cap E(H)|
\end{equation*}contradicting the choice of $H$.

Case 2:$E(C_e)\cap E(F)=\emptyset$.

Note that in this case,
\begin{enumerate}
\item[(I)] the edges that are incident to a
vertex of $C_e$, do not lie on $C_e$ and are colored by $\alpha$,
belong to $F$, which and $E(C_e)\cap E(F)=\emptyset$ imply that:

\item[(II)]all maximum $3-$edge-colorable
subgraphs $H'$ of $G$, which can be obtained from the coloring of
$H$, by leaving any edge $g\in E(C_e)$ uncolored, by coloring the
edges of the even path $C_e-g$ $\beta$ and $\gamma$, alternatively,
and leaving the colors of the rest of edges unchanged, satisfy the
condition $|E(F)\cap E(H')|=|E(F)\cap E(H)|$ is minimum.
\end{enumerate}

Now, we consider a proper partial $3-$edge-coloring $\theta$ of the
graph $G$ obtained from the coloring of $H$ by deleting the colors
of the all edges lying on $C_e$. Since $C_e$ is an odd cycle, there is an $\alpha-\gamma$ alternating path $P_w$ in the
$3-$edge-coloring $\theta$ that starts from a vertex $w\in V(C_e)$
and does not terminate on $C_e$. Choose an edge $x=(w,z)\in E(C_e)$, and
let $y$ be the other edge of $C_e$ that is incident to $w$.

Consider a proper partial $3-$edge-coloring of $G$ obtained from
$\theta$ as follows:
\begin{itemize}
\item shift the colors on the path $P_w$, and clear the color of the edge of
$F$ that is incident to $z$;

\item color $x$ by $\alpha$, and color the edges of the even path
$C_e-x$ by $\beta-\gamma$ alternatively, starting from the edge $y$.
\end{itemize}

It is not hard to see that the new partial $3-$edge-coloring of $G$
corresponds to a maximum $3-$edge-colorable subgraph $H'$ of $G$,
which by (II) satisfies
\begin{equation*}
|E(F)\cap E(H')|<|E(F)\cap E(H)|
\end{equation*}contradicting the choice of $H$. The proof of the
theorem \ref{2factorExtension} is completed.
\end{proof}

We suspect that the theorem \ref{2factorExtension} can be generalized as follows:

\begin{conjecture}Let $F$ be any maximal (not necessarily, maximum) matching of a cubic graph $G$. Then there is a maximum
$3-$edge-colorable subgraph $H$ of $G$, such that $E(H)\cup F=E(G)$.
\end{conjecture}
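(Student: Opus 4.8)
The plan is to mimic the proof of Theorem~\ref{2factorExtension} as closely as possible and to isolate the single place where the weaker hypothesis on $F$ bites. As before, I would choose a maximum $3$-edge-colorable subgraph $H$ of $G$ for which $|E(F)\cap E(H)|$ is as small as possible, fix a proper $3$-edge-coloring of $H$ with colors $\{1,2,3\}$, and aim to show $E(H)\cup F=E(G)$. Suppose not; then there is an edge $e=(u,v)$ lying in neither $F$ nor $H$. By Theorem~\ref{ComplementMatching} the edges adjacent to $e$ are colored, and the maximality of $H$ forces $|C(u)\cap C(v)|=1$ together with $C(u)\cup C(v)=\{1,2,3\}$; write $\{\alpha\}=C(u)\cap C(v)$, $C(u)=\{\alpha,\beta\}$, $C(v)=\{\alpha,\gamma\}$. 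Exactly as in Theorem~\ref{2factorExtension}, invoking \cite{SteffenClassification,Steffen} yields an odd cycle $C_e$ through $e$ whose remaining edges alternate in $\beta,\gamma$ and all of whose pendant edges (incident to $V(C_e)$ but off $C_e$) are colored $\alpha$.

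The case $E(C_e)\cap E(F)\neq\emptyset$ goes through verbatim: picking $f\in E(C_e)\cap E(F)$, leaving $f$ uncolored, and recoloring the even path $C_e-f$ in $\beta,\gamma$ produces a maximum subgraph $H'$ that trades the $F$-edge $f$ for the non-$F$-edge $e$, so $|E(F)\cap E(H')|<|E(F)\cap E(H)|$, a contradiction. All the work is therefore in the case $E(C_e)\cap E(F)=\emptyset$, and here is where the $1$-factor proof used that $F$ covers every vertex, so that each pendant $\alpha$-edge of $C_e$ lies in $F$ (its step (I)); that is exactly what may now fail.

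The replacement observation I would rely on is that maximality of $F$ still gives something: since no edge of $C_e$ lies in $F$, if the pendant $\alpha$-edge at a vertex $w\in V(C_e)$ were also outside $F$ then $w$ would be $F$-uncovered, and if this happened at every vertex of $C_e$ then both endpoints of any edge of $C_e$ would be uncovered, so that edge could be added to $F$ --- contradicting maximality. Hence at least one pendant $\alpha$-edge of $C_e$ lies in $F$. Let $S\subseteq V(C_e)$ be the nonempty set of vertices whose pendant edge is in $F$. I would then run the path-shifting recoloring of Theorem~\ref{2factorExtension}: erase the colors on $C_e$ to obtain a partial coloring $\theta$, in which every vertex of $C_e$ is a degree-one endpoint of the $\alpha$-$\gamma$ subgraph; since $|V(C_e)|$ is odd, the set $T$ of vertices of $C_e$ whose maximal $\alpha$-$\gamma$ alternating path $P_w$ leaves $C_e$ is nonempty (indeed of odd cardinality). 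Anchoring the shift at a vertex $w\in S\cap T$, shifting along $P_w$ frees its pendant $F$-edge, and recoloring $C_e$ with one $\alpha$-edge and the rest $\beta,\gamma$ yields a maximum subgraph $H'$ that contains $e$ and omits an $F$-edge, again contradicting minimality.

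The main obstacle is precisely the last step: the parity argument only guarantees $S\neq\emptyset$ and $T\neq\emptyset$ separately, and there is no evident reason that a single vertex should lie in both, so reconciling the ``$F$-covered pendant'' condition with the ``alternating path leaves $C_e$'' condition is the crux. I would attack it by a refined counting argument relating the vertices of $S$ to the endpoints of the $\alpha$-$\gamma$ paths --- for instance averaging over all admissible choices of the $\alpha$-colored edge $x\in E(C_e)$, or chaining a second alternating path to transport the vacated color from a vertex of $T$ to a vertex of $S$ without disturbing the recoloring of $C_e$. If $S\cap T=\emptyset$ is genuinely possible, then a single odd cycle is not enough and one must either iterate over the family of odd cycles arising from the different omitted edges, or strengthen the extremal choice of $H$ (e.g.\ lexicographically: first minimize $|E(F)\cap E(H)|$, then maximize the number of $F$-uncovered vertices incident to omitted edges). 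I expect this reconciliation, rather than any individual recoloring, to be the decisive difficulty, and it is plausibly the reason the statement is posed here only as a conjecture.
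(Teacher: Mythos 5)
First, a point of comparison you could not have known blind: the paper offers no proof of this statement at all --- it is posed precisely as a conjecture, immediately after Theorem~\ref{2factorExtension}, so there is no ``paper proof'' to match. Your assessment is therefore essentially correct, and your attempt is not a proof: the gap you yourself flag is genuine and is the whole difficulty. Concretely, the recoloring in Case~2 of Theorem~\ref{2factorExtension} needs a single vertex $w\in V(C_e)$ satisfying two conditions at once: (i) the pendant $\alpha$-edge at $w$ lies in $F$ (so that uncoloring it trades an $F$-edge for $e$, strictly decreasing $|E(F)\cap E(H)|$), and (ii) the maximal $\alpha$-$\gamma$ alternating path $P_w$ in $\theta$ leaves $C_e$ (so the shift is legal). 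When $F$ is a $1$-factor, (i) holds at \emph{every} vertex of $C_e$ --- that is exactly step (I) of the paper's proof --- so the parity argument giving $T\neq\emptyset$ suffices. With $F$ merely maximal you correctly extract $S\neq\emptyset$ (in fact you can do better: maximality forbids two adjacent $F$-uncovered vertices, and since $E(C_e)\cap E(F)=\emptyset$ the uncovered vertices of $C_e$ form an independent set on the cycle, so $|S|\geq l+1$ out of $2l+1$), and parity gives $|T|$ odd; but since $|T|$ may equal $1$ while $|V(C_e)\setminus S|$ can be as large as $l$, nothing prevents $S\cap T=\emptyset$, and no step of your argument rules this out.

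Your proposed repairs do not yet close this. Averaging over the choice of the $\alpha$-colored edge $x\in E(C_e)$ does not change $T$, which is determined by $\theta$ alone. The iteration idea has a concrete flaw as stated: if you anchor at $w\in T\setminus S$, the resulting $H'$ satisfies $|E(F)\cap E(H')|=|E(F)\cap E(H)|$, its uncolored edge is the pendant edge at $w$, which again lies outside both $F$ and $H'$, so you are back in the same extremal situation with a new odd cycle --- without a monotone quantity (a potential function or a strengthened lexicographic choice of $H$, as you suggest but do not execute) there is no termination or progress argument, and a priori the process could cycle. So the honest summary is: you have correctly reduced the conjecture to the statement ``some extremal $H$ and some admissible recoloring sequence reaches a vertex of $S\cap T$,'' which is an accurate identification of the crux, but that reduction is where the paper's authors also stopped, which is presumably why the statement appears only as a conjecture.
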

 
 The problem of estimating the size of maximum $2-$ and $3$-edge-colorable subgraphs of cubic and subcubic graphs has been investigated in \cite{AlbHaas1,AlbHaas2,Measure}. Recently, Rizzi has considered the maximum $3$-edge-colorable subgraph problem in the class of triangle-free subcubic graphs and has got the following results:
 
\begin{theorem}\cite{Rizzi} Let $G$ be a triangle-free graph with $\Delta(G)\leq 3$. Then $\nu_3(G)\geq (1-\frac{2}{3\gamma_0(G)})|E(G)|$, where $\gamma_0(G)$ denotes the odd girth of $G$.
\end{theorem}

\begin{corollary}\cite{Rizzi} Let $G$ be a triangle-free graph with $\Delta(G)\leq 3$. Then $\nu_3(G)\geq \frac{13}{15}|E(G)|$.
\end{corollary}

For subcubic graphs containing no parallel edges, Rizzi has shown:

\begin{theorem}\cite{Rizzi} Let $G$ be a graph with $\Delta(G)\leq 3$ and no parallel edges. Then $\nu_3(G)\geq \frac{6}{7}|E(G)|$.
\end{theorem}

 We have recently considered the maximum $2$- and $3$-edge-colorable subgraph problems in the class of cubic graphs, and got the following result:

\begin{theorem}\label{MainTheoremCubics}\cite{Part1}: For every cubic graph $G$:
\begin{equation*}
\nu _{2}(G)\geq \frac{4}{5}|V(G)|=\frac{8}{15}|E(G)| ,\nu _{3}(G)\geq \frac{7}{6}|V(G)|=\frac{7}{9}|E(G)|.
\end{equation*}
\end{theorem}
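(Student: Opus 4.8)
The plan is to prove both inequalities by a single dichotomy, according to whether $G$ has a perfect matching (equivalently, a $2$-factor, since in a cubic graph the complement of a $2$-factor is a perfect matching and conversely). First I would reduce to $G$ connected and dispose of the class-one case: if $G$ is $3$-edge-colorable then $\nu_3(G)=\tfrac{3}{2}|V(G)|$, and taking the two largest of the three color classes gives $\nu_2(G)\ge\tfrac{2}{3}\nu_3(G)=|V(G)|$, so both bounds hold with room to spare. Hence I may assume $G$ is of class two.

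Assume next that $G$ has a $2$-factor $F$; this holds whenever $G$ has a perfect matching, and in particular, by Petersen's theorem, whenever $G$ is bridgeless. Let $M=E(G)\setminus E(F)$ be the complementary perfect matching and let $k$ be the number of odd cycles of $F$; since these cycles are vertex-disjoint and each has at least three vertices, $k\le\tfrac{1}{3}|V(G)|$. For $\nu_3$ I would delete one edge from each odd cycle of $F$: what remains is a disjoint union of even paths and even cycles, hence $2$-edge-colorable, and adding $M$ as a third color produces a $3$-edge-colorable subgraph with $\tfrac{3}{2}|V(G)|-k\ge\tfrac{7}{6}|V(G)|$ edges. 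For $\nu_2$ I would instead take $M$ together with a maximum matching $M'$ of $F$; a union of two matchings is always a disjoint union of paths and even cycles, hence $2$-edge-colorable, and $|M'|=\tfrac{1}{2}(|V(G)|-k)$, so $\nu_2(G)\ge|M|+|M'|=|V(G)|-\tfrac{k}{2}\ge\tfrac{5}{6}|V(G)|$, already beating the claimed $\tfrac{4}{5}|V(G)|$.

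The hard part will be the cubic graphs with no perfect matching, where no $2$-factor exists and the construction above collapses; since every bridgeless cubic graph has a perfect matching, such $G$ must contain a bridge, and these are exactly the graphs that force the weaker constant $\tfrac{4}{5}$ for $\nu_2$. Here I would induct on $|V(G)|$. Given a bridge $e=uv$, on the $u$-side the vertex $u$ has two further neighbors $u_1,u_2$; I would suppress $u$, that is, delete it and insert the edge $u_1u_2$ (permissible since multiple edges are allowed), and do the same on the $v$-side, obtaining two strictly smaller cubic graphs to which the inductive hypothesis applies. The crux is the recombination: a near-optimal subgraph of each reduced graph must be lifted back through the suppressed paths $u_1uu_2$ and glued across $e$, and one must account carefully for the handful of edges gained or lost at $u$, $v$ and $e$, showing that the per-vertex densities $\tfrac{4}{5}$ and $\tfrac{7}{6}$ survive the gluing. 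The slack between the bridgeless estimate $\tfrac{5}{6}$ and the target $\tfrac{4}{5}$ is precisely the budget available to absorb these bridge defects, so I expect this amortized bookkeeping, rather than any single clever construction, to be the real content of the theorem. Finally, the $\nu_2$ bound is best possible: because every bridgeless cubic graph satisfies the stronger $\nu_2(G)\ge\tfrac{5}{6}|V(G)|$, any graph extremal for $\tfrac{4}{5}$ must contain bridges, and suitable cubic graphs with bridges attain equality.
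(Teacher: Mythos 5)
A preliminary remark: the paper does not actually prove Theorem~\ref{MainTheoremCubics} --- it imports it from \cite{Part1} --- so there is no internal proof to compare yours against, and the proposal must stand on its own. As far as you carry it out, it is correct: the class-one reduction is fine, and for a cubic $G$ with a perfect matching $M$ and complementary $2$-factor $F$ having $k\le\tfrac13|V(G)|$ odd cycles, both computations check out --- deleting one edge per odd cycle and adding $M$ as a third matching gives $\nu_3(G)\ge\tfrac32|V(G)|-k\ge\tfrac76|V(G)|$, and $M$ together with a maximum matching of $F$ gives $\nu_2(G)\ge|V(G)|-\tfrac{k}{2}\ge\tfrac56|V(G)|$. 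This is the standard argument, and the graph of figure~\ref{Examples attaining the bounds}b (which has a perfect matching) shows the $\nu_3$ count is sharp.

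The genuine gap is that the case you explicitly defer --- cubic graphs without a perfect matching --- is not a residual case; it is the theorem. By your own $\tfrac56$ bound, every graph attaining $\nu_2=\tfrac45|V(G)|$ lies in this case, and your sketch for it fails at concrete points. First, the suppression operation is ill-defined within the class considered: if the two non-bridge edges at $u$ are parallel, i.e.\ $u_1=u_2$, suppressing $u$ creates a loop, which the paper's graphs forbid; such two-vertex ``dumbbell'' configurations are perfectly legitimate in cubic multigraphs (the paper's extremal examples are themselves built from doubled edges), so the induction step as described does not even produce an object to which the hypothesis applies. Second, the recombination is lossy in a way your announced budget cannot visibly absorb: lifting an edge $u_1u_2$ of color $1$ back to the path $u_1u,\,uu_2$ needs both colors when only two exist, and the second color may already appear at $u_2$, so a near-optimal pair of disjoint matchings in the reduced graph need not lift without losing edges. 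Meanwhile, cubic graphs without perfect matchings can have one bridge per roughly every two vertices (take a cubic tree and hang a doubled-edge triangle gadget at each leaf), so the slack $\tfrac56-\tfrac45=\tfrac1{30}$ per vertex permits an amortized loss of only a small fraction of a single edge per bridge; any scheme that loses even one edge at each bridge is immediately dead. This essentially lossless accounting is precisely what \cite{Part1} spends most of a long structural analysis on, and nothing in the proposal indicates how to achieve it. As written, you have proved the theorem only for cubic graphs possessing a perfect matching.
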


There are graphs attaining bounds of the theorem
\ref{MainTheoremCubics}. The graph from figure \ref{Examples
attaining the bounds}a attains the first bound and the graph from
figure \ref{Examples attaining the bounds}b the second bound.

\begin{center}
\begin{figure}[h]
\begin{center}
\includegraphics[height=10pc]{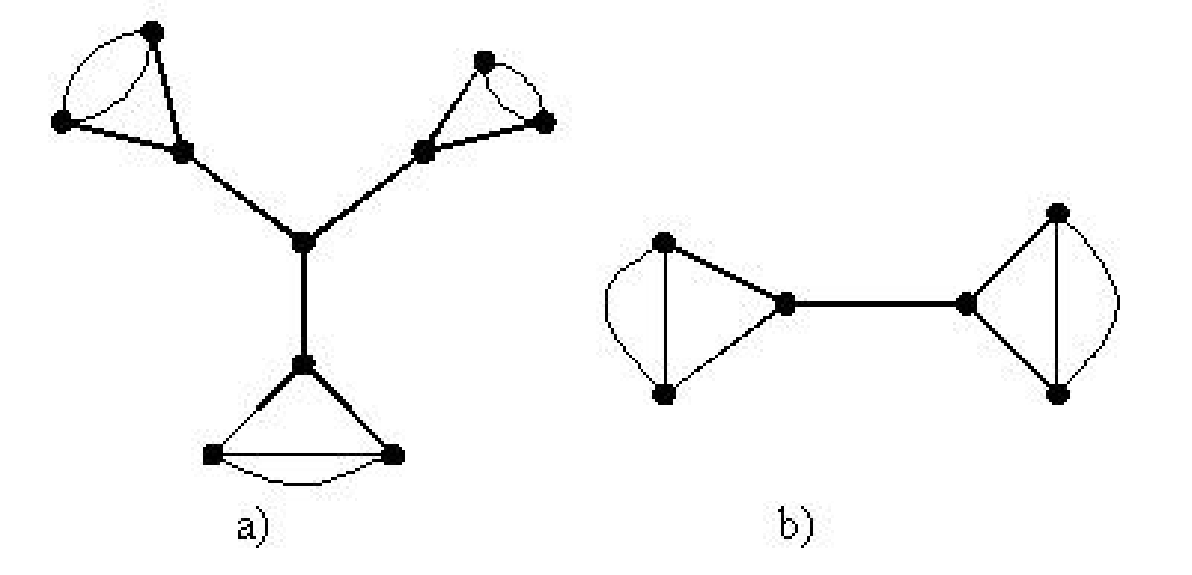}\\
\caption{Examples attaining the bounds of the theorem
\protect\ref{MainTheoremCubics}}\label{Examples attaining the
bounds}
\end{center}
\end{figure}
\end{center}

Note that if there were a cubic graph $G$ attaining the two bounds
at the same time, then we would have:
\begin{equation*}
\nu _{2}(G)+\nu _{3}(G)=\frac{4}{5}|V(G)|+\frac{7}{6}|V(G)|=\frac{59}{30}|V(G)|<2|V(G)|.
\end{equation*}
Thus, to show the absence of a cubic graph attaining all the bounds
of theorem \ref{MainTheoremCubics} at the same time, it suffices to
show the following

\begin{theorem}\label{nu2nu3}
For every cubic graph $G$%
\begin{equation*}
\nu _{2}(G)+\nu _{3}(G)\geq 2|V(G)|.
\end{equation*}
\end{theorem}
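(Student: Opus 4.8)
The plan is to recast the target inequality in terms of the complement matching of a maximum $3$-edge-colorable subgraph, and then to build a large $2$-edge-colorable subgraph out of a well-chosen $2$-factor. Set $n=|V(G)|$, so that $|E(G)|=\tfrac32 n$. If $H$ is any maximum $3$-edge-colorable subgraph, then by Theorem \ref{ComplementMatching} the set $M=E(G)\setminus E(H)$ is a matching, and $|M|=\tfrac32 n-\nu_2$—more precisely $\nu_3(G)=\tfrac32 n-|M|$, so the desired bound $\nu_2(G)+\nu_3(G)\ge 2n$ is equivalent to $\nu_2(G)\ge \tfrac12 n+|M|$. To produce such a subgraph I would fix an arbitrary $1$-factor $F$ of $G$ and apply Theorem \ref{2factorExtension}, obtaining a maximum $3$-edge-colorable $H$ with $E(\bar{F})\subseteq E(H)$ and $M=E(G)\setminus E(H)\subseteq F$. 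Writing $N=F\cap E(H)$ for the colored edges of $F$, we get the disjoint decomposition $E(H)=E(\bar{F})\sqcup N$, whence $|N|=\nu_3(G)-n$ and $|M|=\tfrac12 n-|N|$.

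Next I would extract the $2$-edge-colorable subgraph straight from the $2$-factor $\bar{F}$. Properly $3$-edge-color $H$; then $\bar{F}$ is a vertex-disjoint union of cycles, its even cycles are already $2$-edge-colorable, and deleting one edge from each odd cycle turns $\bar{F}$ into a disjoint union of paths and even cycles, which is $2$-edge-colorable. Letting $q$ be the number of odd cycles of $\bar{F}$, this yields $\nu_2(G)\ge n-q$, so that
\begin{equation*}
\nu_2(G)+\nu_3(G)\ge (n-q)+(n+|N|)=2n+(|N|-q).
\end{equation*}
Everything therefore reduces to the single inequality $q\le |N|$: the number of odd cycles of $\bar{F}$ is at most the number of colored edges of $F$.

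To attack $q\le|N|$ I would restate it combinatorially. Call a vertex \emph{deficient} if it is incident to an edge of $M$; these are exactly the vertices with $\deg_H(v)=2$, there are $2|M|$ of them, and the remaining $n-2|M|=2|N|$ vertices have degree $3$ in $H$ and are precisely the endpoints of the edges of $N$. Thus $q\le|N|$ is equivalent to saying that $H$ has at least $2q$ vertices of degree $3$. Since the cycles of the $2$-factor $\bar{F}$ are vertex-disjoint, it suffices to prove the local claim: \emph{every odd cycle of $\bar{F}$ contains at least two vertices of degree $3$ in $H$} (equivalently, at most $|C|-2$ deficient vertices). Summing over the $q$ disjoint odd cycles then gives $n-2|M|\ge 2q$, i.e. $q\le|N|$, completing the proof.

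The hard part will be this local claim, and it is precisely where the maximality of $H$ must be used: without maximality the inequality fails, since a triangle carrying no chords is $3$-edge-colorable yet has $q=1>0=|N|$. The tools I would bring to bear are the ones already present above. Maximality forces every edge $\{u,v\}\in M$ to have its two endpoints missing \emph{different} colors (a common missing color would let one extend $H$), and, as in the proof of Theorem \ref{ComplementMatching} and via \cite{Steffen}, the endpoints of such an edge lie on a common two-colored Kempe chain. If an odd cycle $C$ of $\bar{F}$ had at most one degree-$3$ vertex, then almost all of its vertices would be deficient; I would rotate the three colors along the odd cycle $C$ and then perform a Kempe swap so as to free one color simultaneously at the two endpoints of some edge of $M$ meeting $C$, color that edge, and so enlarge $H$, the sought contradiction. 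Making this rotation-and-swap rigorous—in particular controlling how the Kempe chains interact with $C$ when $C$ carries chords of $N$—is the main obstacle. The Petersen graph, each of whose $2$-factors is a pair of $5$-cycles with exactly two degree-$3$ vertices apiece, shows that both the local claim and the bound $q\le|N|$ are tight.
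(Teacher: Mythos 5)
Your reduction chain is arithmetically correct ($|N|=\nu_3(G)-n$, $|M|=\tfrac n2-|N|$, $\nu_2(G)\ge n-q$, hence $\nu_2+\nu_3\ge 2n+|N|-q$), but the single inequality you reduce everything to, $q\le|N|$, is \emph{false}, and it is falsified by the paper's own extremal graph (Figure \ref{Examples attaining the bounds}b): two triangles $x_1y_1z_1$ and $x_2y_2z_2$ with doubled edges $x_1y_1$ and $x_2y_2$, joined by the bridge $z_1z_2$. Here $n=6$ and $\nu_3=7$, since in each doubled-edge gadget at most three of the four triangle edges admit a proper $3$-edge-coloring; every $1$-factor consists of $z_1z_2$ and one parallel copy of each doubled edge, so $\bar F$ is always two triangles and $q=2$, while $|N|=\nu_3-n=1$ for \emph{every} admissible choice of $F$ and $H$. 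Your local claim fails identically: each triangle of $\bar F$ contains exactly one vertex of degree $3$ in $H$ (namely $z_i$), not two; so the rotation-and-swap argument you flag as the ``main obstacle'' cannot be completed, because the statement it is meant to prove is wrong. Worse, this graph satisfies $\nu_2+\nu_3=2n$ with equality ($\nu_2=5$, forced since $z_1z_2$ lies on no cycle and the only even cycles are the two digons), so your method yields only $2n-1$ and no amount of care in the Kempe-chain analysis can repair it: the optimal $2$-edge-colorable subgraph here uses \emph{both} parallel copies of each doubled edge and is not of the form ``$\bar F$ minus one edge per odd cycle,'' i.e.\ the bound $\nu_2\ge n-q$ is genuinely too weak at the extremal configurations. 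Two smaller problems: not every cubic graph has a $1$-factor (bridged cubic graphs can fail Petersen's theorem), so ``fix an arbitrary $1$-factor $F$'' already needs a reduction you do not supply; and your tightness remark about the Petersen graph is miscounted — with $|M|=1$ and $|N|=4$, each $5$-cycle of $\bar F$ carries four degree-$3$ vertices and one deficient vertex, so neither the local claim nor $q\le|N|$ is tight there.

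It is instructive that the configurations killing your approach — triangles with multiple edges — are exactly what the paper's proof spends most of its effort on: it takes a minimum counterexample, proves a sequence of claims showing such triangles must be ``sufficiently far'' from one another (Claims \ref{TriangleMultiple}--\ref{TriangleNeighboursCommonNeighbour}, each via a local reduction to a smaller cubic graph), and then, instead of building a $2$-edge-colorable subgraph from a $2$-factor, counts the vertex contributions $\delta(v)$ of the $5$-tuple $(A,B,C,A,B)$ of color classes of a maximum $3$-edge-colorable subgraph, charging the deficit along Steffen's odd cycles of uncolored edges. If you want to salvage your line of attack, the deficit $q-|N|$ would have to be absorbed by showing that each ``bad'' odd cycle of $\bar F$ permits a $2$-edge-colorable gain elsewhere (as the doubled edges do in the example above), which is in effect the accounting the paper performs globally.
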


\begin{proof}Note that it suffices to prove the statement for connected cubic graphs $G$. If $G$ is $2$-connected, then by Petersen theorem \cite{Lov}, it has a $2$-factor $F$. If $F$ is a Hamilton cycle, it is a 
$2$-edge-colorable subgraph with $|V(G)|$ edges, and the result follows. 

If $F$ has
$k\geq 2$ components, then add an edge $(u,v)\in E(G)$ joining vertices in two different
components of $F$ and delete one edge from each component of $F$, so that
$u$ and $v$ have degree two in the resulting graph. This graph is a collection of
disjoint paths with $|V(G)| + 1-k$ edges and is $2$-edge-colorable. Now consider a
minimal connected subgraph $H$ of $G$ that contains $F$. It is easy to see that
$H$ is $3$-edge-colorable and has $|V(G)|+k-1$ edges. The result follows.

If $G$ is not $2$-connected, it has a cutvertex $v$. Let $T$ be a depth-first-search
tree having root $v$. If there are $k$ vertices of degree three in $T$, then by deleting one
edge incident with each of them, we get a $2$-edge-colorable subgraph with
$|V(G)|-1-k$ edges. The number of leaf nodes in the tree $T$ is $k + 2$ and since
the root is a cutvertex, it has degree at least two in the tree. For each
leaf vertex, add one edge joining it to one of its ancestors in the tree. The
resulting graph is $3$-edge-colorable, as can be seen by inductively coloring
the subgraphs induced by the subtrees of $T$. Since this graph has $|V(G)|-1+k+2$
edges, the result follows. The proof of the theorem \ref{nu2nu3} is completed.
\end{proof}

\begin{remark}The graphs from figure \ref{Examples attaining the
bounds} attain the bounds of the theorem, and we suspect that they
are the only connected cubic graphs with this property.
\end{remark}

The rest of the paper is devoted to obtaining a bound for the ratio $\nu(G) /\alpha _{2}(G)$ in the class of cubic graphs. For this purpose we introduce some definitions and a series of results which will be used for proving theorem \ref{MainBound}.

If $u$ is a vertex of a graph $G$, then let $N(u)$ denote the set of vertices of $G$ that are adjacent to $u$. For a path $P$ let $V_0(P)$ denote the set of end-vertices of $P$. Now, if $V_0(P)=\{u,v\}$ and $w\in V(P)$, then let $P_{u,w}$ denote the subpath of $P$ connecting the vertices $u$ and $w$.
Define:

\begin{center}$M_{2}(G)\equiv \lbrace (H,H^{'}) :(H,H^{'})\in B_{2}(G), 
\vert H\vert +\vert H^{'}\vert =\nu _{2}(G), \vert H\vert =\alpha 
_{2}(G)\rbrace .$\\
\end{center}

Let $A$ and $B$ be matchings of a graph $G$.

\begin{definition}
\label{altpaths} A path (cycle) $e_1, e_2, ..., e_l\ (l\geq1)$ is
called $A$-$B$ alternating if the edges with odd indices belong to $A
\backslash B$ and others to $B \backslash A$, or vice-versa.
\end{definition}

\begin{definition}
\label{maxaltpaths} An $A$-$B$ alternating path $P$ is called maximal
if there is no other $A$-$B$ alternating path that contains $P$ as a proper
subpath.
\end{definition}

The sets of $A$-$B$ alternating cycles and maximal alternating paths are
denoted by $C(A,B)$ and $P(A,B)$, respectively.

The set of the paths from $P(A,B)$ that have even (odd) length is denoted by 
$P_e(A,B)$ ($P_o(A,B)$).

The set of the paths from $P_o(A,B)$ starting from an edge of $A$ (resp. $B$) is
denoted by $P_o^A(A,B)$ (resp. $P_o^B(A,B)$).

Note that every edge $e\in A\cup B$ either belongs to $A\cap B$ or lies on a
cycle from $C(A,B)$ or lies on a path from $P(A,B)$.

Moreover, the following properties are easy to prove \renewcommand{\labelenumi}{(\alph{enumi})}

\begin{property}
\label{AB} \ \renewcommand{\labelenumi}{(\alph{enumi})}

\begin{enumerate}
\item if $F\in C(A,B)\cup P_{e}(A,B)$ then $A$ and $B$ have the same number
of edges that belong to $F$,

\item if $P\in P_{o}^{A}(A,B)$ then the difference between the numbers of
edges that lie on $P$ and belong to $A$ and $B$ is one.
\end{enumerate}
\end{property}

This property immediately implies:

\begin{property}
\label{cardinalitydiff} If $A$ and $B$ are matchings of a graph $G$ then 
\begin{equation*}
|A|-|B|=|P_{o}^{A}(A,B)|-|P_{o}^{B}(A,B)|.
\end{equation*}
\end{property}

Berge's well-known theorem states that a matching $M$ of a graph $G$ is
maximum if and only if $G$ does not contain an $M$-augmenting path \cite{Lov,West}.
This theorem immediately implies:

\begin{property}
\label{maxmatchingproperty} If $M$ is a maximum matching and $H$ is a
matching of a graph $G$ then 
\begin{equation*}
P_{o}^{H}(M,H)=\emptyset,
\end{equation*}%
and therefore, $|M|-|H|=|P_{o}^{M}(M,H)|$.
\end{property}

The proof of the following property is similar to the one of property \ref%
{maxmatchingproperty}:

\begin{property}
\label{HH'} If $(H,H^{\prime})\in M_{2}(G)$ then $P_{o}^{H^{\prime}}(H,H^{%
\prime})=\emptyset $.
\end{property}

Let $G$ be a graph. Over all $(H,H^{\prime })\in M_{2}(G)$ and all maximum
matchings $M$ of $G$, consider the pairs $((H,H^{\prime }),M)$ for which $|M
\cap (H\cup H')|$ is maximized. Among these, choose a pair $((H,H^{\prime }),M)$ such
that $|M \cap H|$ is maximized.

From now on $H,H^{\prime}$ and $M$ are assumed to be chosen as described
above. For this choice of $H,H^{\prime}$ and $M$, consider the paths from $%
P_o^M(M,H)$ and define $M_A$ and $H_A$ as the sets of edges lying on these
paths that belong to $M$ and $H$, respectively.

\begin{lemma}
\label{MHaltpaths} $C(M,H)=P_{e}(M,H)=P_{o}^{H}(M,H)=\emptyset $.
\end{lemma}

\begin{proof}
Property \ref{maxmatchingproperty} implies $P_{o}^{H}(M,H)=\emptyset$. Let
us show that $C(M,H)=P_{e}(M,H)=\emptyset $. Suppose that there is $F_{0}\in
C(M,H)\cup P_{e}(M,H)$. Define: 
\begin{equation*}
M^{\prime }\equiv \lbrack M\backslash E(F_{0})]\cup \lbrack H\cap E(F_{0})].
\end{equation*}

Consider the pair $((H,H^{\prime }),M^{\prime })$. Note that $M^{\prime}$ is
a maximum matching, and 
\begin{equation*}
|(H\cup H') \cap M^{\prime}| \geq |(H\cup H') \cap M|,
\end{equation*}
thus taking into account the choice of the pair $((H,H^{\prime }),M)$, we must have equality. However, this is a contradiction since for this new pair $((H,H^{\prime }),M^{\prime })$, we have that $|H \cap M'|> |H \cap M|$ contradicting $|H \cap M|$ being maximum.
\end{proof}

\begin{corollary}
\label{outerHMs} $M \cap H = M \backslash M_A = H \backslash H_A$.
\end{corollary}

\begin{lemma}
\label{M_AandH's} Each edge of $\ M_{A}\backslash H^{\prime }$ is adjacent
to two edges of $H^{\prime }$.
\end{lemma}

\begin{proof}
Let $e$ be an arbitrary edge from $M_{A}\backslash H^{\prime }$. Note that $%
e\in M$, $e\notin H$, $e\notin H^{\prime }$. Now, if $e$ is not adjacent to
an edge of $H^{\prime }$, then $H\cap (H^{\prime }\cup \{e\})=\emptyset $
and 
\begin{equation*}
|H|+|H^{\prime }\cup \{e\}|>|H|+|H^{\prime }|=\nu _{2}(G),
\end{equation*}%
which contradicts $(H,H^{\prime}) \in M_2(G)$.

On the other hand, if $e$ is adjacent to only one edge $f\in H^{\prime }$,
then consider the pair $(H,H^{\prime \prime })$, where $H^{\prime \prime
}\equiv (H^{\prime }\backslash \{f\}) \cup \{e\}$. Note that 
\begin{equation*}
H \cap H^{\prime \prime }=\emptyset, \ |H^{\prime\prime}| = |H^{\prime}|
\end{equation*}
and 
\begin{equation*}
|(H\cup H^{\prime\prime})\cap M| > |(H\cup H^{\prime})\cap M|,
\end{equation*}
which contradicts $|(H\cup H^{\prime})\cap M|$ being maximum.
\end{proof}

\begin{lemma}
\label{onlyodd} $C(M_A, H^{\prime}) = P_e(M_A, H^{\prime}) = P_o^{M_A}(M_A,
H^{\prime}) = \emptyset$.
\end{lemma}

\begin{proof}
First of all, let us show that $C(M_A,H^{\prime}) = P_e(M_A, H^{\prime}) =
\emptyset$. For the sake of contradiction, suppose that there is $F_{0}\in
C(M_A, H^{\prime}) \cup P_e(M_A, H^{\prime})$. Define: 
\begin{equation*}
H^{\prime\prime}\equiv [ H^{\prime}\backslash E(F_0)] \cup [ M_A \cap
E(F_0)].
\end{equation*}

Consider the pair of matchings $(H,H^{\prime \prime })$. Note that by
the definition of an alternating path we have $M_{A}\cap H=\emptyset $,
therefore 
\begin{equation*}
H\cap H^{\prime \prime }=\emptyset,
\end{equation*}
\begin{equation*}
|H| + |H^{\prime\prime}| = |H| + |H^{\prime}| = \nu _{2}(G)
\end{equation*}
(see property \ref{AB}(a)).

Thus $(H,H^{\prime \prime })\in M_{2}(G)$ and%
\begin{equation*}
|(H\cup H^{\prime\prime})\cap M| > |(H\cup H^{\prime})\cap M|,
\end{equation*}%
which contradicts $|(H\cup H^{\prime})\cap M|$ being maximum.

On the other hand, the end-edges of a path from $P_o^{M_A}(M_A, H^{\prime})$
are from $M_A$ and are adjacent to only one edge from $H^{\prime}$
contradicting lemma \ref{M_AandH's}. Therefore, $P_o^{M_A}(M_A,
H^{\prime})=\emptyset$.
\end{proof}

\begin{lemma}
\label{equality} $|H^{\prime }|=|P_{o}^{H^{\prime }}(M_{A},H^{\prime
})|+|H_{A}|+\nu (G)-\alpha _{2}(G)$.
\end{lemma}

\begin{proof}
Property \ref{cardinalitydiff} implies that
\begin{equation*}
|H^{\prime }|-|M_{A}|=|P_{o}^{H^{\prime }}(M_{A},H^{\prime
})|-|P_{o}^{M_{A}}(M_{A},H^{\prime })|,
\end{equation*}%
and due to property \ref{AB}(b), property \ref{maxmatchingproperty} 
\begin{equation*}
|M_{A}|-|H_{A}|=|P_{o}^{M}(M,H)|=|M|-|H|=\nu (G)-\alpha _{2}(G).
\end{equation*}%
By lemma \ref{onlyodd} $P_{o}^{M_{A}}(M_{A},H^{\prime })=\emptyset $,
therefore, 
\begin{equation*}
|H^{\prime }|=|P_{o}^{H^{\prime }}(M_{A},H^{\prime
})|+|M_{A}|=|P_{o}^{H^{\prime }}(M_{A},H^{\prime })|+|H_{A}|+\nu (G)-\alpha
_{2}(G).
\end{equation*}
\end{proof}

\begin{lemma}
\label{MHpaths} Let $P\in P_{o}(M,H)$ and assume that $P=m_1, h_1, m_2, ...,
h_{l-1}, m_l$, $l \geq 1, m_i \in M, 1 \leq i \leq l, h_j \in H, 1 \leq j
\leq l-1$. Then $l\geq 3$ and $\{ m_1, m_l \} \subseteq H^{\prime}$.
\end{lemma}

\begin{proof}
We claim that $l\geq 3$. Note that if $l=1$ then $P=m_{1}$, and clearly $%
m_{1}\in H^{\prime }$ as otherwise we could enlarge $H$ by adding $m_{1}$ to
it which contradicts $(H,H^{\prime })\in M_{2}(G)$. Thus $l\geq 2$. Now, let us show that $m_{1}\in
H^{\prime }$. If $m_{1}\notin H^{\prime }$ then define%
\begin{equation*}
H_{1}\equiv (H\backslash \{h_{1}\})\cup \{m_{1}\}.
\end{equation*}

Note that%
\begin{equation*}
H_{1}\cap H^{\prime }=\emptyset, \ |H_1| = |H|,
\end{equation*}%
and 
\begin{equation*}
|(H_1\cup H') \cap M| > |(H\cup H') \cap M|
\end{equation*}

which contradicts $|(H\cup H') \cap M|$ being maximum. Thus $m_1 \in H^{\prime}$.
Similarly, one can show that $m_l \in H^{\prime}$.

By property \ref{HH'} $P_{o}^{H^{\prime }}(H,H^{\prime })=\emptyset $,
there is an $i,$ $1\leq i\leq l$, such that $m_i \in M \backslash (H \cup
H^{\prime})$. Since $\{m_1, m_l\} \subseteq H^{\prime}$, we have $l \geq 3$.
\end{proof}

\begin{corollary}
\label{HAbound} $\left\vert H_{A}\right\vert \geq 2(\nu (G)-\alpha _{2}(G))$.
\end{corollary}

\begin{proof}
Due to lemma \ref{MHpaths} every path $P\in P_{o}(M,H)$ has length at least
five, therefore it contains at least two edges from $H$. By property \ref%
{maxmatchingproperty}, there are 
\begin{equation*}
\left\vert P_{o}(M,H)\right\vert =\left\vert P_{o}^{M}(M,H)\right\vert =\nu
(G)-\alpha _{2}(G),
\end{equation*}

paths from $P_{o}(M,H)$, therefore 
\begin{equation*}
\left\vert H_{A}\right\vert \geq 2(\nu (G)-\alpha _{2}(G)).
\end{equation*}
\end{proof}

\begin{corollary}
\label{VerticesMH} Every vertex lying on a path from $P(M,H)=P_{o}^{M}(M,H)$
is incident to an edge from $H^{\prime }$.
\end{corollary}

\begin{proof}
Suppose $w$ is a vertex lying on a path from $P(M,H)=P_{o}^{M}(M,H)$ and
assume that $e$ is an edge from $M_{A}$ incident to the vertex $w$. Clearly,
if $e\in H^{\prime }$ then the corollary is proved, thus we may assume
that $e\notin H^{\prime }$. Note that $e\in M_{A}\backslash H^{\prime }$
therefore by lemma \ref{M_AandH's} $e$ is adjacent to two edges from $%
H^{\prime }$. Thus $w$ is incident to an edge from $H^{\prime }$.
\end{proof}

Let $Y=Y(M,H,H')$ denote the set of the paths from $P(H, H^{\prime})$ starting from
the end-edges of the paths from $P_o^M(M, H)$. Note that $Y$ is well-defined
since by lemma \ref{MHpaths} these end-edges belong to $H^{\prime}$.
According to property \ref{HH'}, $Y \subseteq P_e(H, H^{\prime})$, thus the
set of the last edges of the paths from $Y$ is a subset of $H$. 
Denote this subset by $H_Y$.

\begin{lemma}
\label{H'HpathsandPoddMAH'} \ \renewcommand{\labelenumi}{(\alph{enumi})}

\begin{enumerate}
\item \label{H'Hpaths} $|Y| = 2(\nu(G) - \alpha_2(G))$ and the length of the
paths from $Y$ is at least four,

\item \label{PoddMAH'} $\left\vert P_{o}^{H^{\prime }}(M_{A},H^{\prime
})\right\vert \geq \nu (G)-\alpha _{2}(G).$
\end{enumerate}
\end{lemma}

\begin{proof}
(a) Due to property %
\ref{HH'}, all end-edges of the paths from $P_{o}^{M}(M,H)$ lie on different
paths from $Y$. Therefore $|Y|=2|P_{o}^{M}(M,H)|=2(\nu (G)-\alpha _{2}(G))$.

Since the edges from $H_Y$ are adjacent to only one edge from $H^{\prime}$,
we conclude that they do not lie on a path from $P_o^M(M,H)$ (corollary \ref%
{VerticesMH}). Thus, by corollary \ref{outerHMs}, $H_Y \subseteq M\cap H$%
. Furthermore, as the first two edges of a path from $Y$ lie on a path from $%
P_o^M(M,H)$, and the last edge does not, we conclude that its length is at
least four.

(b) From $H_Y
\subseteq M\cap H$ we get 
\begin{equation*}
\left\vert M\cap H\right\vert \geq |H_Y| = |Y| = 2|P_o^M(M, H)| = 2(\nu
(G)-\alpha _{2}(G)).
\end{equation*}

On the other hand, every edge from $H_Y$ is adjacent to an edge from $%
H^{\prime }\backslash M$, which is an end-edge of a path from $%
P_o^{H^{\prime}}(M_A, H^{\prime})$, therefore%
\begin{equation*}
2(\nu (G)-\alpha _{2}(G))\leq \left\vert M\cap H\right\vert \leq 2\left\vert
P_{o}^{H^{\prime }}(M_{A},H^{\prime })\right\vert
\end{equation*}

or%
\begin{equation*}
\nu (G)-\alpha _{2}(G)\leq \left\vert P_{o}^{H^{\prime }}(M_{A},H^{\prime
})\right\vert .
\end{equation*}
\end{proof}

Up to now, we assumed that $G$ is an arbitrary graph. Now, we turn back to the problem of bounding the ratio $\nu(G) /\alpha _{2}(G)$ in the class of cubic graphs, and assume that $G$ is an arbitrary cubic graph. Moreover, we also assume that the pair $((H,H'),M)$ is chosen as above. In addition, we make one more assumption on the choice of the pair: $((H,H'),M)$. We assume that $\sum_{P\in Y(M,H,H^{'})}{\vert P\vert }$ is maximized subject to previous conditions.

\begin{lemma}\label{EvenPathCase} Let $P\in P_{e}(H,H^{'})$ and $v\in V_{0}(P)$. 
Then there is no edge $(v,w)\in E(G)$, such that $w$ lies on a path $P^{'}\in P_{e}(H,H^{'})$, $P^{'}\ne P$, and the length of the path $P'_{v',w}$ is even, where $v'\in V_0(P')$.
\end{lemma}

\begin{proof} Assume that the length of the path $P'_{v^{'},w}$ is even. We can assume that 
the vertex $v^{'}$ is that end-vertex of $P^{'}$, which is incident to an edge from $
H^{'}$. Without loss of generality, we can also assume that the vertex $v$ is incident to an edge from $H^{'}$, since in the other case we can exchange the edges between $H$ and $H'$ on the path $P$. Note that in this case the vertex $w$, which is the other end-vertex of path $P'_{v^{'},w}$, is incident to an edge from the set $E(P_{v^{'},w})\cap H$.

Define:

\begin{center}$H_{1}\equiv (H\backslash E(P_{v^{'},w}))\cup \lbrace 
(v,w)\rbrace \cup (H^{'}\cap E(P_{v^{'},w}))$,\end{center}

\begin{center}$H_{1}^{'}\equiv (H^{'}\backslash E(P_{v^{'},w}))\cup 
(H\cap E(P_{v^{'},w}-w))$.\end{center}

Note that:

\begin{center}$H_{1}\cap H_{1}^{'}=\emptyset ,$\\
\end{center}

\begin{center}$\vert H_{1}\vert +\vert H_{1}^{'}\vert =\vert H\vert 
+\vert H^{'}\vert =\nu _{2}(G)$ \end{center}

but

\begin{center}$\vert H_{1}\vert >\vert H\vert =\alpha _{2}(G)$,
\end{center}

which contradicts the condition $(H,H^{'})\in M_{2}(G)$.
\end{proof}

\begin{lemma}\label{OddPathCase} Let $ P\in P_{o}^H(H,H^{'})$. Then there are no paths $P_{1},P_{2}\in P_{e}(H,H^{'})$ and $v_{1}\in V_{0}(P_{1})$, $v_{2}\in V_{0}(P_{2})$, such that
$(v_{1},w_{1})\in E(G)$, $
(v_{2},w_{2})\in E(G)$ and $(w_{1},w_{2})\in E(P)$.
\end{lemma}

\begin{proof}
Suppose that there are such paths $P_{1}$ and $P_{2}$. Since $P_{1}$ and $P_{2}$ are even paths, we can assume, that on the path $P_{1}$ the vertex $v_{1}$ is incident to an edge 
from $H$, and on the path $P_{2}$ the vertex $v_{2}$ is incident to an 
edge from $H^{'}$. Note that this assumption is also true, when $P_{1}=P_{2}$. 

Assume, that $V_{0}(P)=\lbrace u_{1},u_{2}\rbrace $ and the
path $P$ connects vertices $u_{1}$ and $u_{2}$, passing through $
w_{1}$ then $w_{2}$ (figure \ref{OddPathCaseFigure}).

\begin{center}
\begin{figure}[h]
\centering
\includegraphics[width=16.52cm,height=6.26cm]{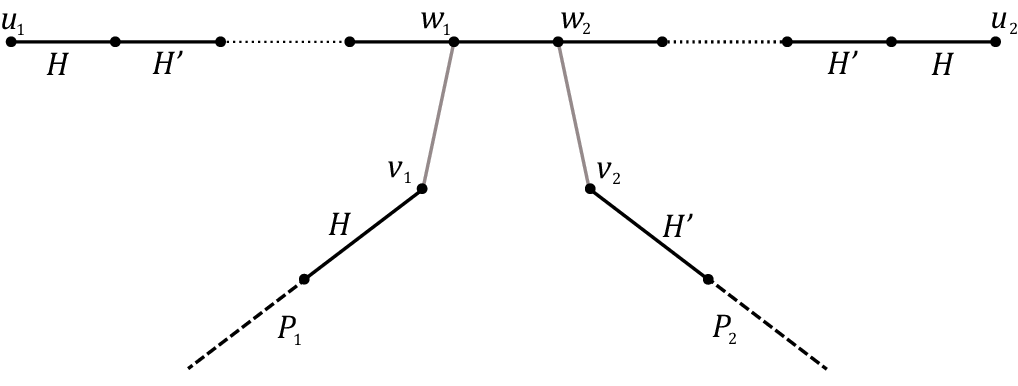}
\caption{The typical situation in the proof of lemma \ref{OddPathCase}}
\label{OddPathCaseFigure}
\end{figure}
\end{center}

We will consider two cases.

Case A: If $(w_{1},w_{2})\in H$.

Define:

\begin{center}$H_{1}\equiv (H\backslash(E(P_{u_{1}w_{1}})\cup \lbrace 
(w_{1},w_{2})\rbrace ))\cup \lbrace (v_{2},w_{2})\rbrace \cup (H^{'}\cap 
E(P_{u_{1}w_{1}})),$\\
\end{center}

\begin{center}$H_{1}^{'}\equiv (H^{'}\backslash E(P_{u_{1}w_{1}}))\cup 
\lbrace (v_{1},w_{1})\rbrace \cup (H\cap E(P_{u_{1}w_{1}}))$.
\end{center}

Note that:

\begin{center}$H_{1}\cap H_{1}^{'}=\emptyset $,\end{center}

but

\begin{center}$\vert H_{1}\vert +\vert H_{1}^{'}\vert >\vert H\vert 
+\vert H^{'}\vert =\nu _{2}(G)$, \end{center}

which contradicts the condition $(H,H^{'})\in M_{2}(G)$.

Case B: If $(w_{1},w_{2})\in H'$.

Define:
 
\begin{center}$H_{1}\equiv (H\backslash E(P_{u_{2}w_{2}}))\cup \lbrace 
(v_{2},w_{2})\rbrace \cup (H^{'}\cap E(P_{u_{2}w_{2}})),$\end{center}

\begin{center}$H_{1}^{'}\equiv (H^{'}\backslash(E(P_{u_{2}w_{2}})\cup 
\lbrace (w_{1},w_{2})\rbrace ))\cup \lbrace (v_{1},w_{1})\rbrace \cup 
(H\cap E(P_{u_{2}w_{2}}))$.\end{center}

Note that:

\begin{center}$H_{1}\cap H_{1}^{'}=\emptyset $,\end{center}

but

\begin{center}$\vert H_{1}\vert +\vert H_{1}^{'}\vert >\vert H\vert 
+\vert H^{'}\vert =\nu _{2}(G)$, \end{center}

which contradicts the condition $(H,H^{'})\in M_{2}(G)$.
\end{proof}

\begin{lemma}\label{CycleCaseDiffPaths} Let $C\in C(H,H^{'})$. Then there are no different paths
$P_{1},P_{2}\in P_{e}(H,H^{'})$ and $v_{1}\in V_{0}(P_{1})$, $v_{2}\in V_{0}(P_{2})$, such that $(v_{1},w_{1})\in E(G)$ and $(v_{2},w_{2})\in E(G)$, where $\lbrace w_{1},w_{2}\rbrace \subseteq V(C)$ and the length of a path $P_{w_{1},w_{2}}$, which connects vertices $w_{1},w_{2}$, all whose edges lie on the cycle $C$, is odd. 
\end{lemma}

\begin{proof}
Assume that there are two such paths $P_{1}$ and $P_{2}$. 

Let us consider two cases.

Case A: $(w_{1},w_{2})\in E(G)$ and $(w_{1},w_{2})\in H$ (figure \ref{CaseACycleCaseDiffPathsFigure}).

\begin{center}
\begin{figure}[h]
\centering
\includegraphics[width=13.15cm,height=6.47cm]{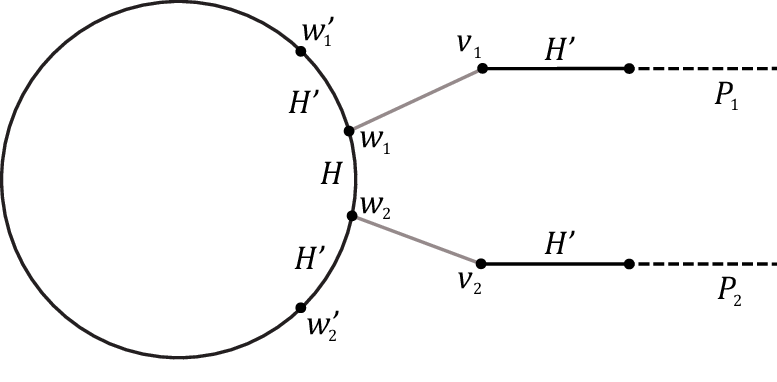}
\caption{Lemma \ref{CycleCaseDiffPaths}, Case A}
\label{CaseACycleCaseDiffPathsFigure}
\end{figure}
\end{center}

Since the paths $P_{1}$ and $P_{2}$ are even, we can assume, 
that on the paths $P_{1}$ and $P_{2}$ both vertices $v_{1}$ and $v_{2}$ are incident to edges from $H'$.

Define:

\begin{center}$H_{1}\equiv (H\backslash\lbrace (w_{1},w_{2})\rbrace 
)\cup \lbrace (v_{1},w_{1}), (v_{2},w_{2})\rbrace $.\end{center}

Note that:

\begin{center}$H_{1}\cap H^{'}=\emptyset $,\end{center}

but

\begin{center}$\vert H_{1}\vert +\vert H^{'}\vert >\vert H\vert +\vert 
H^{'}\vert =\nu _{2}(G),$ \end{center}

which contradicts the condition $(H,H^{'})\in M_{2}(G)$.

Case B: $(w_{1},w_{2})\notin H$ (figure \ref{CaseBCycleCaseDiffPathsFigure}). 

Note that this case contains the subcase $(w_{1},w_{2})\in E(G)$ and $(w_{1},w_{2})\in H^{'}$.

\begin{center}
\begin{figure}[h]
\centering
\includegraphics[width=13.15cm,height=7.82cm]{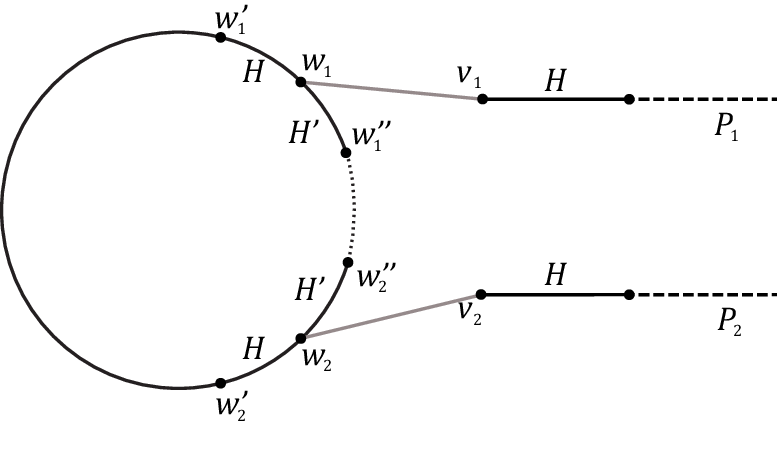}
\caption{Lemma \ref{CycleCaseDiffPaths}, Case B}
\label{CaseBCycleCaseDiffPathsFigure}
\end{figure}
\end{center}

Since the paths $P_{1}$ and $P_{2}$ are even, we can assume, 
that on the paths $P_{1}$ and $P_{2}$ both vertices $v_{1}$ and $v_{2}$ are incident to edges from $H$.

Note that in this case the number of edges of cycle $C$ that belong to 
$H$ is greater or equal to two. Moreover, as $(w_{1},w_{2})\notin H$, then $
(w_{1},w_{1}^{'})\ne (w_{2},w_{2}^{'})$. Note that it is possible that $(w_{1},w_{1}^{''})=(w_{2},w_{2}^{''})$.

Define:

\begin{center}$H_{1}^{'}\equiv (H'\backslash\lbrace 
(w_{1},w_{1}^{''}),(w_{2},w_{2}^{''})\rbrace )\cup \lbrace 
(v_{1},w_{1}), (v_{2},w_{2})\rbrace $.\end{center}

Note that:

\begin{center}$(H,H_{1}^{'})\in M_{2}(G)$,\end{center}

but

\begin{center}$P_{o}^{H_1^{'}}(H,H_{1}^{'})\ne \emptyset $,
\end{center}

which contradicts property \ref{HH'}.
\end{proof}

\begin{lemma}\label{CycleCaseTwoPathsY} Let $C\in C(H,H^{'})$. Then there are no paths $P_{1},P_{2}\in Y=Y(M,H,H^{'})$ and $v_{1}\in V_{0}(P_{1}),$ $v_{2}\in V_{0}(P_{2})$, such that $(v_{1},w_{1})\in E(G)$, $(v_{2},w_{2})\in E(G)$ and $(w_{1},w_{2})\in E(C)$.
\end{lemma}

\begin{proof}
If the paths $P_1$ and $P_2$ are different, then the statement of the lemma follows from lemma \ref{CycleCaseDiffPaths}. 

So we can assume that $P_{1}=P_{2}=P$. We can also assume that on the path $P$ the vertex $v_{1}$ is incident to an edge from $H$, and the vertex $v_{2}$ is incident to an edge from $H^{'}$. Recall that lemma \ref{MHpaths} and the proof of the lemma \ref{H'HpathsandPoddMAH'} imply that both these edges belong to $M$ (figure \ref{CycleCaseTwoPathsYFigure}).

\begin{center}
\begin{figure}[h]
\centering
\includegraphics[width=11.95cm,height=6.47cm]{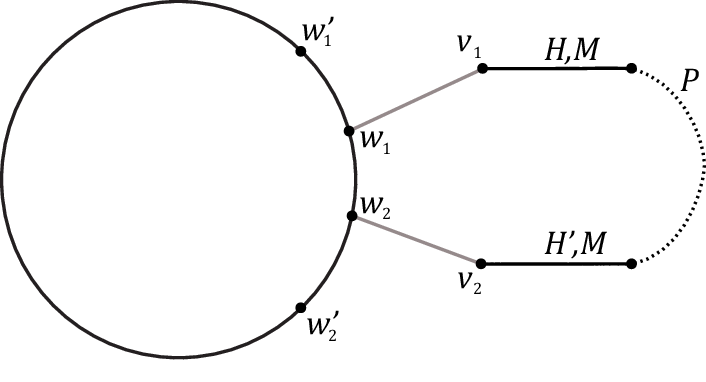}
\caption{The situation in the proof of the lemma \ref{CycleCaseTwoPathsY}}
\label{CycleCaseTwoPathsYFigure}
\end{figure}
\end{center}

We will consider two cases.

Case A: $(w_{1},w_{2})\notin M$.

If $(w_{1},w_{2})\in H'$, then define

\begin{center}$H_{1}\equiv H$,\end{center}

\begin{center}$H_{1}^{'}\equiv (H^{'}\backslash\lbrace 
(w_{1},w_{2})\rbrace )\cup \lbrace (v_{1},w_{1})\rbrace $.\end{center}

Otherwise, when $(w_{1},w_{2})\in H$, define

\begin{center}$H_{1}\equiv (H\backslash\lbrace (w_{1},w_{2})\rbrace 
)\cup \lbrace (v_{2},w_{2})\rbrace $,\end{center}

\begin{center}$H_{1}^{'}\equiv H^{'}$.\end{center}

Note that:

\begin{center}$H_{1}\cap H_{1}^{'}=\emptyset ,$\end{center}

\begin{center}$\vert H_{1}\vert +\vert H_{1}^{'}\vert =\nu _{2}(G), $\end{center}

\begin{center}$\vert H_{1}\vert =\alpha _{2}(G), $\end{center}

\begin{center}$\vert M\cap (H_{1}\cup H_{1}^{'})\vert =\vert M\cap (H\cup H^{'})\vert 
,$\end{center}

\begin{center}$\vert M\cap H_{1}\vert =\vert M\cap H\vert ,$\end{center}

but:

\begin{center}$\sum_{P'\in Y(M,H_{1},H_{1}^{'})}{\vert P'\vert }> 
\sum_{P'\in Y(M,H,H^{'})}{\vert P'\vert }$,\end{center}

which contradicts the choice of the pair $((H,H^{'}),M)$.

Case B: $(w_{1},w_{2})\in M$.

If $(w_{1},w_{2})\in H'$, then define:

\begin{center}$H_{1}\equiv (H\backslash\lbrace 
(w_{2},w_{2}^{'})\rbrace )\cup \lbrace (v_{2},w_{2})\rbrace $,
\end{center}

\begin{center}$H_{1}^{'}\equiv H^{'}$.\end{center}

Otherwise, when $(w_{1},w_{2})\in H$, define

\begin{center}$H_{1}\equiv H$,\end{center}

\begin{center}$H_{1}^{'}\equiv (H^{'}\backslash\lbrace (w_{1},w_{1}^{'})\rbrace )\cup 
\lbrace (v_{1},w_{1})\rbrace $\end{center}

Note, that:

\begin{center}$H_{1}\cap H_{1}^{'}=\emptyset ,$\end{center}

\begin{center}$\vert H_{1}\vert +\vert H_{1}^{'}\vert =\nu _{2}(G), $\end{center}

\begin{center}$\vert H_{1}\vert =\alpha _{2}(G), $\end{center}

\begin{center}$\vert M\cap (H_{1}\cup H_{1}^{'})\vert =\vert M\cap (H\cup H^{'})\vert 
,$\end{center}

\begin{center}$\vert M\cap H_{1}\vert =\vert M\cap H\vert ,$\end{center}

but:

\begin{center}$\sum_{P'\in Y(M,H_{1},H_{1}^{'})}{\vert P'\vert }> 
\sum_{P'\in Y(M,H,H^{'})}{\vert P'\vert }$,\end{center}

which contradicts the choice of the pair $((H,H^{'}),M)$.
\end{proof}

\begin{lemma}\label{MANeighbourhood} Let $e=(u,v)\in M_{A}\backslash H^{'}$ and assume that $N(u)= \lbrace 
u_{1},u_{2},v\rbrace $, $N(v)= \lbrace v_{1},v_{2},u\rbrace $ (figure \ref{eNeighbourhood}). 
Then there is no path $P\in Y(M,H,H^{'})$ and $w\in V_{0}(P)$ such that 
$w$ is adjacent to one of the vertices $u_{1},u_{2}, v_{1},v_{2}$.
\end{lemma}

\begin{center}
\begin{figure}[h]
\centering
\includegraphics[width=10.68cm,height=5.06cm]{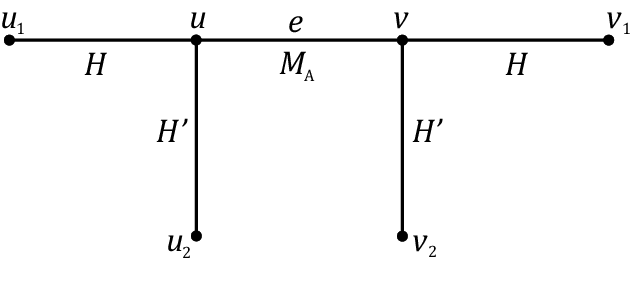}
\caption{The neighbourhood of the edge $e=(u,v)$}
\label{eNeighbourhood}
\end{figure}
\end{center}

\begin{proof} First of all note that for the proof of the lemma, we only
need to consider the case of vertices $u_{1}$ and $u_{2}$. 

Assume the opposite. Then there is a path $P\in Y(M,H,H^{'})$ and $w\in V_{0}(P)$ contradicting the statement.

Consider two cases.

Case A: $(u,u_{1})\notin E(P),(u,u_{2})\notin E(P)$. 

We need to consider two sub-cases.

Sub-case A1: $(w,u_{1})\in E(G)$.

As $P\in P_{e}(H,H^{'})$, we can assume that the vertex $w$ is 
incident to an edge from $H^{'}$.

Define:

\begin{center}$H_{1}\equiv (H\backslash\lbrace (u,u_{1}),(v,v_{1})\rbrace )\cup 
\lbrace e,(w,u_{1})\rbrace .$\end{center}

Note that:

\begin{center}$H_{1}\cap H^{'}=\emptyset ,$\end{center}

\begin{center}$\vert H_{1}\vert +\vert H^{'}\vert =\vert H\vert +\vert H^{'}\vert 
=\nu _{2}(G), $\end{center}

\begin{center}$\vert H_{1}\vert =\vert H\vert =\alpha _{2}(G)$, 
\end{center}

but:

\begin{center}$\vert M\cap (H_{1}\cup H^{'})\vert >\vert M\cap (H\cup H^{'})\vert ,$\end{center}

which contradicts the choice of the pair $((H,H^{'}),M)$.

Sub-case A2: $(w,u_{2})\in E(G)$.

As $P\in P_{e}(H,H^{'})$, we can assume that the vertex $w$ is 
incident to an edge from $H^{'}$.

Define:

\begin{center}$H_{1}^{'}\equiv (H^{'}\backslash\lbrace (u,u_{2}),(v,v_{2})\rbrace 
)\cup \lbrace e,(w,u_{2})\rbrace .$\end{center}

Note, that:

\begin{center}$H\cap H_{1}^{'}=\emptyset , $\end{center}

\begin{center}$\vert H\vert +\vert H_{1}^{'}\vert =\vert H\vert +\vert 
H^{'}\vert =\nu _{2}(G)$,\end{center}

but:

\begin{center}$\vert M\cap (H\cup H_{1}^{'})\vert >\vert M\cap (H\cup H^{'})\vert ,$\end{center}

which contradicts the choice of the pair $((H,H^{'}),M)$.

Case B: $(u,u_{1})\in E(P),(u,u_{2})\in E(P)$. 

We need to consider two sub-cases.

Sub-case B1: The vertex $w$ is incident to an edge from the set $H^{'}\cap M$. 

In this case we have two further sub-cases.

Sub-case B1.1: $(w,u_{1})\in E(G)$.

Define:

\begin{center}$H_{1}\equiv (H\backslash\lbrace (u,u_{1}),(v,v_{1})\rbrace )\cup 
\lbrace e,(w,u_{1})\rbrace : $\end{center}

Note that:

\begin{center}$H_{1}\cap H^{'}=\emptyset , $\end{center}

\begin{center}$\vert H_{1}\vert +\vert H^{'}\vert =\vert H\vert +\vert H^{'}\vert 
=\nu _{2}(G), $\end{center}

\begin{center}$\vert H_{1}\vert =\vert H\vert =\alpha _{2}(G)$,
\end{center}

but

\begin{center}$\vert M\cap (H_{1}\cup H^{'})\vert >\vert M\cap (H\cup H^{'})\vert ,$\end{center}

which contradicts the choice of the pair $((H,H^{'}),M)$.

Sub-case B1.2: If $(w,u_{2})\in E(G)$.

Define:

\begin{center}$H_{1}\equiv (H\backslash(E(P_{w,u})\cup \lbrace (v,v_{1})\rbrace 
))\cup (H^{'}\cap E(P_{w,u}))\cup \lbrace e\rbrace ,$\end{center}

\begin{center}$H_{1}^{'}\equiv (H^{'}\backslash(E(P_{w,u})\cup \lbrace 
(u,u_{2})\rbrace ))\cup (H\cap E(P_{w,u}))\cup \lbrace (w,u_{2})\rbrace 
,$\end{center}

where recall that $P_{w,u}$ is the subpath of $P$, which connects the vertices 
$w$ and $u$.

Note that:

\begin{center}$H_{1}\cap H_{1}^{'}=\emptyset , $\end{center}

\begin{center}$\vert H_{1}\vert +\vert H_{1}^{'}\vert =\vert H\vert +\vert H^{'}\vert 
=\nu _{2}(G), $\end{center}

\begin{center}$\vert H_{1}\vert =\vert H\vert =\alpha _{2}(G)$,\end{center}

but:

\begin{center}$\vert M\cap (H_{1}\cup H_{1}^{'})\vert >\vert M\cap (H\cup H^{'})\vert 
,$\end{center}

which contradicts the choice of the pair $((H,H^{'}),M)$.

Sub-case B2: The vertex $w$ is incident to an edge from the set $H\cap M$.

As $P\in P_{e}(H,H^{'})$, then by exchanging the edges of the path $P$ 
between sets $H$ and $H^{'}$, we get a new pair $(H_{1},H_{1}^{'})$ of edge-disjoint pairs of matchings, that satisfies to conditions

\begin{center}$(H_{1},H_{1}^{'})\in M_{2}(G),$\end{center}

\begin{center}$\vert M\cap (H_{1}\cup H_{1}^{'})\vert =\vert M\cap 
(H\cup H^{'})\vert $ is maximized.\end{center}

Moreover, in this case the vertex $w$ is incident to an edge from $
H_{1}^{'}\cap M$, and therefore it brings to the sub-case B1.
\end{proof}

\begin{corollary}\label{MAH'PathsCaseCorollary} If $P\in Y(M,H,H^{'})$, then for every 
$v\in V_{0}(P)$ and $w\in V(P^{'})$, where $P'\in P_{0}^{H^{'}}(H^{'}, M_{A})$, if $
(v,w)\in E(G)$, then $\vert P'\vert =1$.
\end{corollary}

Let $P_{Y}\in Y(M,H,H')$ and $P\in P_{o}^{M}(M,H)$ be two paths such that $
V_{0}(P_{Y})\cap V_{0}(P)\ne \emptyset $. Note that we have only one 
vertex with $v\in V_{0}(P_{Y})\cap V_{0}(P)$; moreover, $v$ is 
incident to an edge from $H'$. Denote by $w_{0}$ the vertex, 
for which we have $w_{0}\in V(P_{Y})\cap V(P)$, $w_{0}$ is incident to an
edge from $M_{A}\backslash H'$ and all vertices of the path $P_{v,w_{0}}$
, which connects the vertices $v$ and $w_{0}$, are from $V(P_{Y})\cap V(P)$.

\begin{lemma}\label{TwoYEndsEdge} Let $v_{1}\in V_{0}(P_{1})$ and $
v_{2}\in V_{0}(P_{2})$, where $P_{1}, P_{2}\in Y(M,H,H')$. Then, there is no edge $(v_{1},v_{2})\in E(G)$.
\end{lemma}

\begin{proof} Assume the contrary. Note that if $P_{1}\ne P_{2}$, 
then we can exchange the edges of $H$ and $H'$ on the paths $P_{1}$ and $P_{2}$, so that the  edges incident to $v_1$ and $v_2$ become from $H'$. 

Having done this, we can add $(v_{1},v_{2})$ to $H$ and increase the number of edges in the $H\cup H'$, which will contradict the choice of the pair $(H,H')$. 

So, we can assume that $P_{1}=P_{2}$ (figure \ref{TwoYEndsEdgeFigure}).

\begin{center}
\begin{figure}[h]
\centering
\includegraphics[width=14.69cm,height=8.14cm]{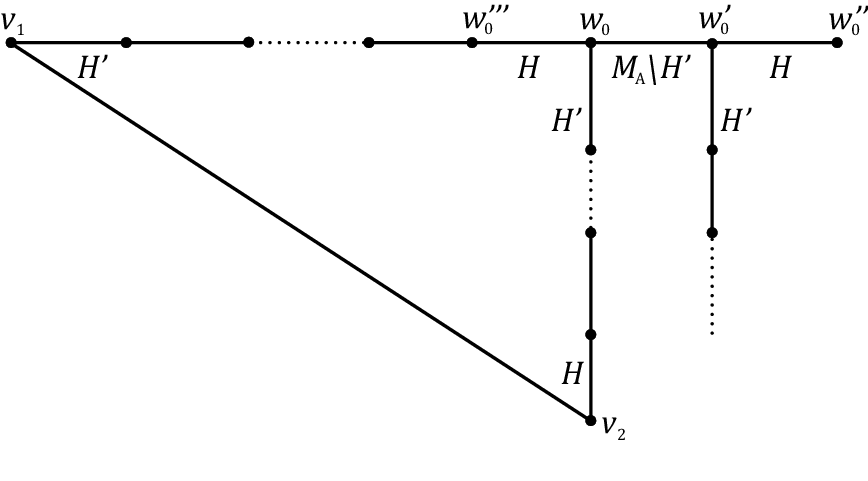}
\caption{There can be no edge $(v_1,v_2)$}
\label{TwoYEndsEdgeFigure}
\end{figure}
\end{center}

Define:

\begin{center}$H_{1}\equiv (H\backslash(E(P_{v_{1}w_{0}})\cup \lbrace ( 
w_{0}^{'},w_{0}^{''})\rbrace ))\cup (H^{'}\cap E(P_{v_{1}w_{0}}))\cup 
\lbrace (w_{0},w_{0}^{'})\rbrace ,$\end{center}

\begin{center}$H_{1}^{'}\equiv (H^{'}\backslash E(P_{v_{1}w_{0}}))\cup (H\cap 
E(P_{v_{1}w_{0}^{'''}}))\cup \lbrace (v_{1},v_{2})\rbrace :$\end{center}

Note, that:

\begin{center}$H_{1}\cap H_{1}^{'}=\emptyset ,$\end{center}

\begin{center}$\vert H_{1}\vert +\vert H_{1}^{'}\vert =\vert H\vert 
+\vert H^{'}\vert =\nu _{2}(G),$ \end{center}

\begin{center}$|H_{1}|=\alpha _{2}(G),$ \end{center}

but

\begin{center} $\vert M\cap (H_{1}\cup H_{1}^{'})\vert >\vert M\cap 
(H\cup H^{'})\vert $,\end{center}

which contradicts the choice of the pair $((H,H^{'}),M)$. Thus $(v_{1},v_{2})\notin E(G)$.
\end{proof}

\begin{lemma}\label{vwEdgeLyingOutside} Let $v\in V_{0}(P)$, where $P\in Y(M,H,H')$ and $(v,w)$ is an edge with $(v,w)\notin E(P)$. Then, $w$ is incident to edges from $H$ and $H'$, which we will denote by $e_{w}(H)$ and $e_{w}(H^{'})$, respectively. Moreover, one of these edges is from $M$.
\end{lemma}

\begin{proof} Assume that $v$ is incident to an edge from $H^{'}$ (the other case can be solved in a similar way). First of all, let us show that $w$ is incident to an edge from $H$. If $w$ is not incident to an edge from $H$, then we can add the edge $(v,w)$ to $H$ 
and increase the number of edges in $H\cup H'$, which would contradict the choice of the pair $(H,H')$. Thus, $w$ is incident to an edge from $H$. 

Now, we will show that $w$ is incident to an edge from $H'$. If $w$ were not incident to an edge from $H'$, then as follows from lemma \ref{TwoYEndsEdge}, $w$ is not the 
other end-vertex of $P$, thus we can exchange the edges of $P$ between $H$ 
and $H'$, and as a result, we can add the edge $(v,w)$ to $H'$ and 
increase the number of edges in $H\cup H'$, which would contradict the choice of the pair $(H,H')$. Thus, $w$ is incident to an edge from $H'$, too.

Finally, we will show that one of edges $e_{w}(H)$ and $e_{w}(H^{'})$ belongs to $M$. If neither of edges $e_{w}(H)$ and $e_{w}(H^{'})$ were from $M$, then we would have a path from $P_{e}(M,H)$ or $P_{o}^{H}(M,H)$ beginning from the vertex $w$, which would contradict lemma \ref{MHaltpaths}. Thus, one of edges $e_{w}(H)$ and $e_{w}(H^{'})$ is from $M$.
\end{proof}

\begin{lemma}\label{SecondPath} Let $P\in P_{o}^{M}(M,H)$, $P_{Y}\in Y(M,H,H')$ and $V_0(P)\cap V_0(P_{Y})\neq \emptyset$. Then, if $P_{Y}$ contains edges from another path $P'\in P_{o}^{M}(M,H)$ ($P'\neq P$), then $P'$ has at least two edges from $M_{A}\backslash H^{'}$. 
\end{lemma}

\begin{proof} As we have noted in the proof of lemma \ref{MHpaths}, each path from $P_{o}^{M}(M,H)$ contains an edge from $M_{A}\backslash H^{'}$. Now, if we assume that $P'$ has exactly one edge from $M_{A}\backslash H^{'}$, then $P_Y$ must intersect at least one of two paths from $Y(M,H,H')$ that begins from a vertex of $V_0(P')$, which would contradict the fact that the paths of $Y(M,H,H')$ are disjoint (proof of lemma \ref{H'HpathsandPoddMAH'}(a)).
\end{proof}

\begin{theorem}\label{MainBound} For every cubic graph $G$ the 
inequality $\frac{\nu (G)}{\alpha _{2}(G)}\le \frac{9}{8}$ holds.
\end{theorem}

\begin{proof} From lemma \ref{equality}, we have:

\begin{center}$\vert H^{'}\vert =\vert P_{o}^{H^{'}}(M_{A},H^{'})\vert 
+\vert H_{A}\vert +\nu (G)-\alpha _{2}(G)$.\end{center}

We claim that:
\begin{equation}\label{inequality1}
\vert P_{o}^{H^{'}}(M_{A},H^{'})\vert +\vert H_{A}\vert 
\ge 7(\nu (G)-\alpha _{2}(G)).
\end{equation}

As $\vert H^{'}\vert \le \vert H\vert =\alpha _{2}(G)$, then from inequality (\ref{inequality1}) we have:

\begin{center}$\alpha _{2}(G)\ge \vert H^{'}\vert \ge 8(\nu 
(G)-\alpha _{2}(G))$,\end{center}

which is the same, as:

\begin{center}$\frac{\nu (G)}{\alpha _{2}(G)}\le \frac{9}{8}$.
\end{center}

Define:

\begin{center}$V_{Y}\equiv \bigcup_{P\in Y(M,H,H^{'})}{V_{0}(P)}$\end{center}

Note that by lemma \ref{H'HpathsandPoddMAH'}(a):

\begin{center}$\vert V_{Y}\vert =2\vert Y\vert =4(\nu (G)-\alpha 
_{2}(G))$.\end{center}

Therefore for proving (\ref{inequality1}), we only need to show, that:
\begin{equation}\label{inequality2}
\vert P_{o}^{H^{'}}(M_{A},H^{'})\vert +\vert H_{A}\vert 
\ge 3(\nu (G)-\alpha _{2}(G))+\vert V_{Y}\vert.
\end{equation}

For proving (\ref{inequality2}), let us show, that we can partition the set $P_{o}^{H^{'}}(M_{A},H^{'}) \cup H_{A}$ into two sets $S_{1}$ and $S_{2}$, such that $S_{1}\cap S_{2}=\emptyset $, $S_{1}\cup S_{2}=P_{o}^{H^{'}}(M_{A},H^{'})\cup H_{A}$, $\vert 
S_{1}\vert \ge 3(\nu (G)-\alpha _{2}(G))$ and $\vert S_{2}\vert 
=\vert V_{Y}\vert $.

First of all, we will give the description of the set $S_{2}$. For every vertex $
v\in V_{Y}$ conform one edge from $H_{A}$ or one path from $P_{o}^{H^{'}}(M_{A},H^{'})$, such that for every pair of vertices $v_{1},v_{2}\in V_{Y}$, which are different from each other, we have 
different edges or paths from $ P_{o}^{H^{'}}(M_{A},H^{'})\cup  H_{A}$.

Consider a vertex $v\in V_{Y}$ and an edge $(v,w)$, with $(v,w)\notin E(P)$, 
where $v\in V_{0}(P)$, $P\in Y$. From lemma \ref{vwEdgeLyingOutside}, we 
have, that the vertex $w$ is incident to edges from $H$ and $H'
$, which we have denoted by $e_{w}(H)$ and $e_{w}(H^{'})$,
respectively.

Consider a mapping $f:V_{Y}\to P_{o}^{H^{'}}(M_{A},H^{'})\cup H_{A}$, 
where for every $v\in V_{Y}$:\\

$f(v)=\begin{cases}\mbox{the path from } P_{o}^{H^{'}}(M_{A},H^{'})\mbox{ which contains the edge }
e_{w}(H^{'}), & \mbox{if } e_{w}(H)\in M \\ e_{w}(H), & \mbox{if } e_{w}(H^{'})\in M\end{cases}$\\

Denote by $S_{2}$ the image set of the mapping $f$. Note, 
that the lengths of paths from $S_{2}$ are one, which follows from 
corollary \ref{MAH'PathsCaseCorollary}. Thus, $S_{2}\subseteq H\cup H'$.

We claim that the mapping $f:V_{Y}\to P_{o}^{H^{'}}(M_{A},H^{'})\cup H_{A}$ is injective. Suppose on the contrary that there are two vertices $
v_{1},v_{2}\in V_{Y},(v_{1}\ne v_{2})$, for which $f(v_{1})=f(v_{2})$. Note that since the graph $G$ is cubic, this can occur only in that case, when for some $w_{1}$ and $w_{2}$ ($w_{1}\ne w_{2}$), we have $(v_{1},w_{1})\in E(G)$, $(v_{2},w_{2})\in E(G)$ and $(w_{1},w_{2})\in E(G)$.

Lemma \ref{OddPathCase} implies that $(w_{1},w_{2})\notin E(P^{'})
$ for every $P'\in P_{o}^{H}(H,H')$. Lemma \ref{CycleCaseTwoPathsY} implies that $(w_{1},w_{2})\notin E(C^{'})$ for every $C'\in C(H,H')$.

Lemma \ref{EvenPathCase} implies that $(w_{1},w_{2})\notin E(P^{'})
$ for every $P'\in P_{e}(H,H')$ with $P^{'}\ne P_{1}$ and $P^{'}\ne P_{2}$, where $v_{1}\in V_{0}(P_{1})$, $v_{2}\in V_{0}(P_{2})$ and $P_1, P_2 \in Y$. Thus, it remains to consider the case when the edge $(w_{1},w_{2})$ lies on at least one of paths $P_{1}$ and $P_{2}$. Without loss of generality, we can assume that $(w_{1},w_{2})$ lies on $P_{1}$.

We will consider four cases:\\

Case A: The vertex $v_{1}$ is incident to an edge from $H'$ and the edge $
(w_{1},w_{2})$ belongs to the subpath $P_{v_{1},w_{0}}$ of $P_{1}$. 

Note that, if $(w_{1},w_{2})\in H'$, then from definition of $f$ we 
have, that $f(v_{1})\ne f(v_{2})$, because every edge of $
P_{v_{1},w_{0}}$, which belongs to $H'$, also belongs to $M$.

Therefore, for this case we will consider only the situation, when $
(w_{1},w_{2})\in H$. On the other hand, note that lemma \ref{EvenPathCase} 
implies, that distance of $w_{2}$ from end-vertices of $P_{1}$ is 
odd (figure \ref{CaseAFigure}).

\begin{center}
\begin{figure}[h]
\centering
\includegraphics[width=17.09cm,height=4.81cm]{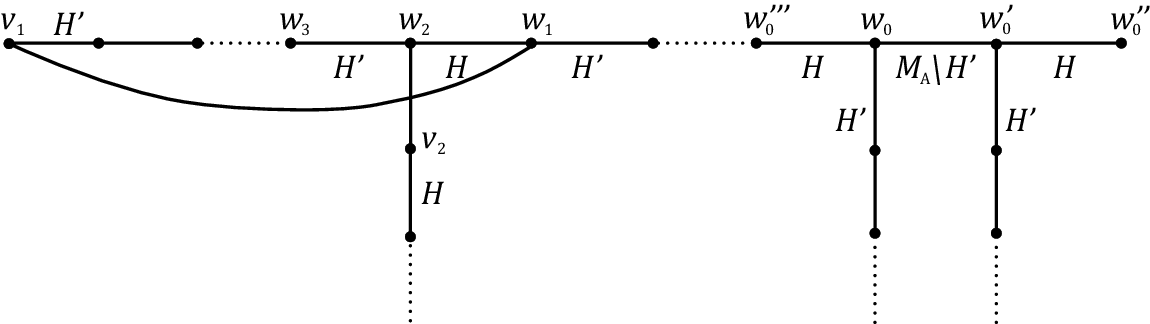}
\caption{Theorem \ref{MainBound}, Case A}
\label{CaseAFigure}
\end{figure}
\end{center}

Define:

\begin{center}$H_{1}\equiv (H\backslash(E(P_{v_{1}w_{0}})\cup \lbrace 
( w_{0}^{'},w_{0}^{''})\rbrace ))\cup (H^{'}\cap E(P_{v_{1}w_{0}}))\cup 
\lbrace (w_{0},w_{0}^{'})\rbrace $,\end{center}

\begin{center}$H_{1}^{'}\equiv (H^{'}\backslash E(P_{v_{1}w_{0}}))\cup 
(H\cap (E(P_{v_{1}w_{0}^{'''}})\backslash\lbrace (w_{1},w_{2})\rbrace 
))\cup \lbrace (v_{1},w_{1}),(v_{2},w_{2})\rbrace $.\end{center}

Note that:

\begin{center}$H_{1}\cap H_{1}^{'}=\emptyset ,$\\
\end{center}

\begin{center}$\vert H_{1}\vert +\vert H_{1}^{'}\vert =\vert H\vert 
+\vert H^{'}\vert =\nu _{2}(G),$ \end{center}

\begin{center}$\vert H_{1}\vert=\alpha _{2}(G),$\end{center}

but:

\begin{center}$\vert M\cap (H_{1}\cup H_{1}^{'})\vert >\vert M\cap 
(H\cup H^{'})\vert $,\end{center}

which contradicts the choice of the pair $((H,H^{'}),M)$.

Case B: The vertex $v_{1}$ is incident to an edge from $H$ and the edge $
(w_{1},w_{2})$ belongs to $P_{1}$, but not to the subpath $
P_{v_{1},w_{0}}$ of $P_{1}$. 

Denote by $u_{1}$ the other end-vertex of $P_{1}$. Note that, if $(w_{1},w_{2})\in H'$, then from definition of $f$ we will have, that $f(v_{1})\ne f(v_{2})$, because every edge 
of $P_{u_{1},w_{0}}$, which belongs to $H'$, also belongs to $M$.

Therefore, for this case we will consider only the situation, when $
(w_{1},w_{2})\in H$. On the other hand, note that lemma \ref{EvenPathCase} implies 
that the vertex $w_{1}$ belongs to the subpath of $P_{1}$, 
which connects the vertices $w_{2},w_{0},v_{1}$ in this order (figure \ref{CaseBFigure}).

\begin{center}
\begin{figure}[h]
\centering
\includegraphics[width=17.09cm,height=10.08cm]{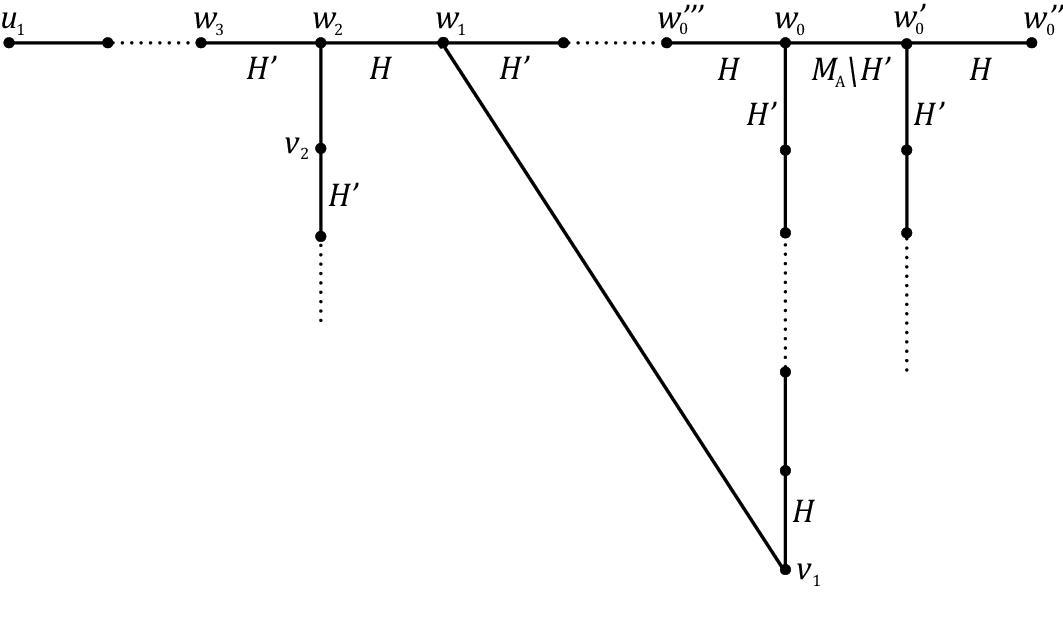}
\caption{Theorem \ref{MainBound}, Case B}
\label{CaseBFigure}
\end{figure}
\end{center}

Define:

\begin{center}$H_{1}\equiv (H\backslash(E(P_{w_{2}w_{0}})\cup \lbrace 
( w_{0}^{'},w_{0}^{''})\rbrace ))\cup (H^{'}\cap E(P_{w_{2}w_{0}}))\cup 
\lbrace (w_{0},w_{0}^{'}),(v_{2},w_{2})\rbrace $,\end{center}

\begin{center}$H_{1}^{'}\equiv (H^{'}\backslash E(P_{w_{2}w_{0}}))\cup 
(H\cap E(P_{w_{1}w_{0}^{'''}}))\cup \lbrace (v_{1},w_{1})\rbrace $.
\end{center}

Note that:

\begin{center}$H_{1}\cap H_{1}^{'}=\emptyset ,$\\
\end{center}

\begin{center}$\vert H_{1}\vert +\vert H_{1}^{'}\vert =\vert H\vert 
+\vert H^{'}\vert =\nu _{2}(G),$ \end{center}

\begin{center}$\vert H_{1}\vert=\alpha _{2}(G),$\end{center}

but:

\begin{center}$\vert M\cap (H_{1}\cup H_{1}^{'})\vert >\vert M\cap 
(H\cup H^{'})\vert $,\end{center}

which contradicts the choice of the pair $((H,H^{'}),M)$.

Case C: The vertex $v_{1}$ is incident to an edge from $H'$ and the edge $
(w_{1},w_{2})$ belongs to $P_{1}$, but not to the subpath $
P_{v_{1},w_{0}}$ of $P_{1}$.

We will consider two sub-cases:

Sub-case C.1: $(w_{1},w_{2})\in H^{'}$. 

In this case Lemma \ref{EvenPathCase} implies that the vertex $w_{1}$ 
belongs to the subpath of $P_{1}$, which connects the vertices $
v_{1}, w_{2}$. Note that, if $(w_{1},w_{2})\in H'\cap M$, then from 
definition of $f$ we have, that $f(v_{1})\ne f(v_{2})$.

Therefore, for this case we will consider only the situation, when $
(w_{1},w_{2})\notin M$. We can suppose, that the vertex $v_{2}$ is 
incident to an edge from $H$ (figure \ref{CaseC1Figure}).

\begin{center}
\begin{figure}[h]
\centering
\includegraphics[width=14.34cm,height=12.51cm]{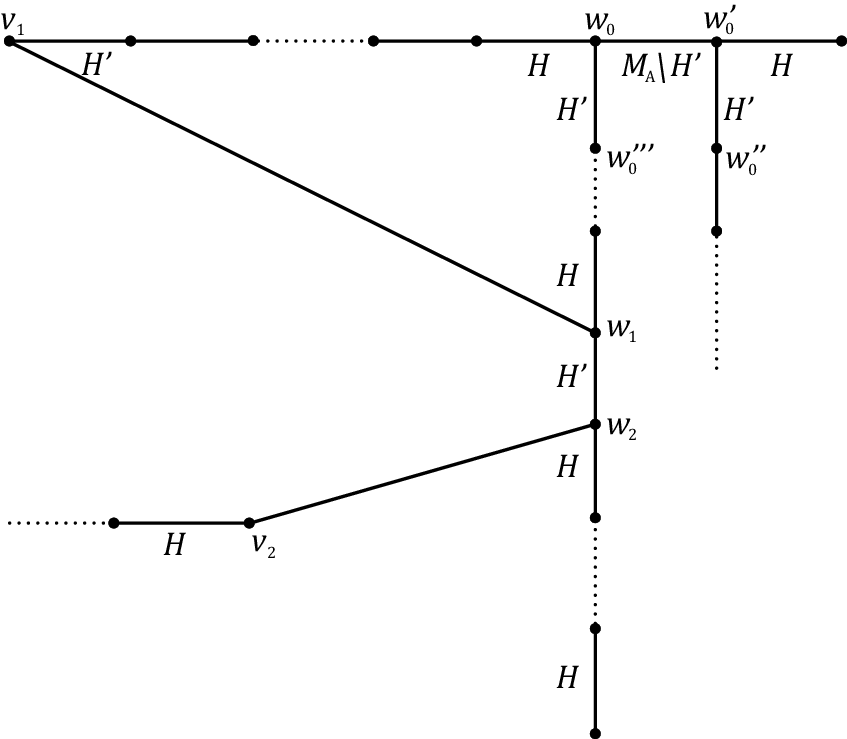}
\caption{Theorem \ref{MainBound}, sub-case C1}
\label{CaseC1Figure}
\end{figure}
\end{center}

Define:

\begin{center}$H_{1}\equiv (H\backslash E(P_{w_{0}w_{1}}))\cup 
(H^{'}\cap E(P_{w_{0}^{'''}w_{1}}))\cup \lbrace (v_{1},w_{1})\rbrace $,
\end{center}

\begin{center}$H_{1}^{'}\equiv (H^{'}\backslash(E(P_{w_{0}w_{2}})\cup 
\lbrace ( w_{0}^{'},w_{0}^{''})\rbrace ))\cup (H\cap 
E(P_{w_{0}w_{1}}))\cup \lbrace (w_{0},w_{0}^{'}),(v_{2},w_{2})\rbrace $
.\end{center}

Note that:

\begin{center}$H_{1}\cap H_{1}^{'}=\emptyset ,$\\
\end{center}

\begin{center}$\vert H_{1}\vert +\vert H_{1}^{'}\vert =\vert H\vert 
+\vert H^{'}\vert =\nu _{2}(G),$ \end{center}

\begin{center}$\vert H_{1}\vert=\alpha _{2}(G),$\end{center}

but:

\begin{center}$\vert M\cap (H_{1}\cup H_{1}^{'})\vert >\vert M\cap 
(H\cup H^{'})\vert $,\end{center}

which contradicts the choice of the pair $((H,H^{'}),M)$.

Sub-case C.2: $(w_{1},w_{2})\in H$. 

In this case lemma \ref{EvenPathCase} implies that the vertex $w_{2}$ belongs 
to the subpath of $P_{1}$, which connects the vertices $v_{1}, 
w_{1}$. Note that, if $(w_{1},w_{2})\in H\cap M$, then from definition 
of $f$ we will have, that $f(v_{1})\ne f(v_{2})$.

Therefore, for this case we will consider only the situation, when $
(w_{1},w_{2})\notin M$. We can suppose, that the vertex $v_{2}$ is 
incident to an edge from $H$ (figure \ref{CaseC2Figure}).

\begin{center}
\begin{figure}[h]
\centering
\includegraphics[width=14.34cm,height=12.51cm]{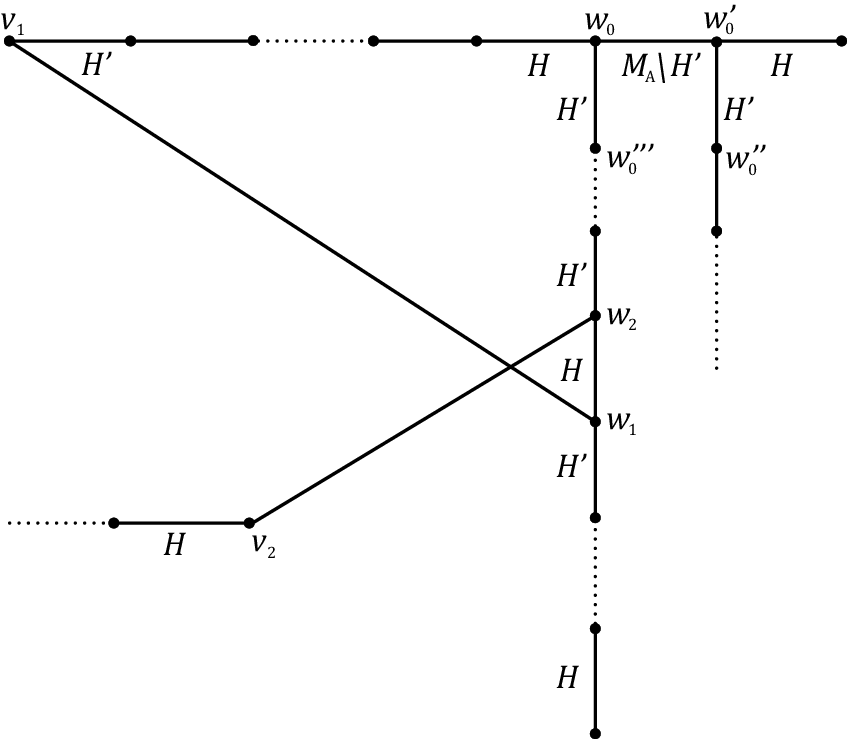}
\caption{Theorem \ref{MainBound}, sub-case C2}
\label{CaseC2Figure}
\end{figure}
\end{center}

Define:

\begin{center}$H_{1}\equiv (H\backslash E(P_{w_{0}w_{1}}))\cup 
(H^{'}\cap E(P_{w_{0}^{'''}w_{2}}))\cup \lbrace (v_{1},w_{1})\rbrace $,
\end{center}

\begin{center}$H_{1}^{'}\equiv (H^{'}\backslash(E(P_{w_{0}w_{2}})\cup 
\lbrace ( w_{0}^{'},w_{0}^{''})\rbrace ))\cup (H\cap 
E(P_{w_{0}w_{2}}))\cup \lbrace (w_{0},w_{0}^{'}),(v_{2},w_{2})\rbrace $
.\end{center}

Note that:

\begin{center}$H_{1}\cap H_{1}^{'}=\emptyset ,$\end{center}

\begin{center}$\vert H_{1}\vert +\vert H_{1}^{'}\vert =\vert H\vert 
+\vert H^{'}\vert =\nu _{2}(G),$ \end{center}

\begin{center}$\vert H_{1}\vert=\alpha _{2}(G),$\end{center}

but:

\begin{center}$\vert M\cap (H_{1}\cup H_{1}^{'})\vert >\vert M\cap 
(H\cup H^{'})\vert $,\end{center}

which contradicts the choice of the pair $((H,H^{'}),M)$.

Case D: The vertex $v_{1}$ is incident to an edge from $H$ and the edge $
(w_{1},w_{2})$ belongs to the subpath $P_{v_{1},w_{0}}$ of $P_{1}$.

We will consider two sub-cases:

Sub-case D.1: $(w_{1},w_{2})\in H^{'}$. 

In this case lemma \ref{EvenPathCase} implies that the vertex $w_{2}$ 
belongs to the subpath of $P_{1}$, which connects the vertices $
v_{1}, w_{1}$. Note that, if $(w_{1},w_{2})\in H'\cap M$, then from 
definition of $f$ we will have, that $f(v_{1})\ne f(v_{2})$.

Therefore, for this case we will consider only the situation, when $
(w_{1},w_{2})\notin M$. We can suppose, that the vertex $v_{2}$ is 
incident to an edge from $H'$ (figure \ref{CaseD1Figure}).

\begin{center}
\begin{figure}[h]
\centering
\includegraphics[width=14.27cm,height=13.63cm]{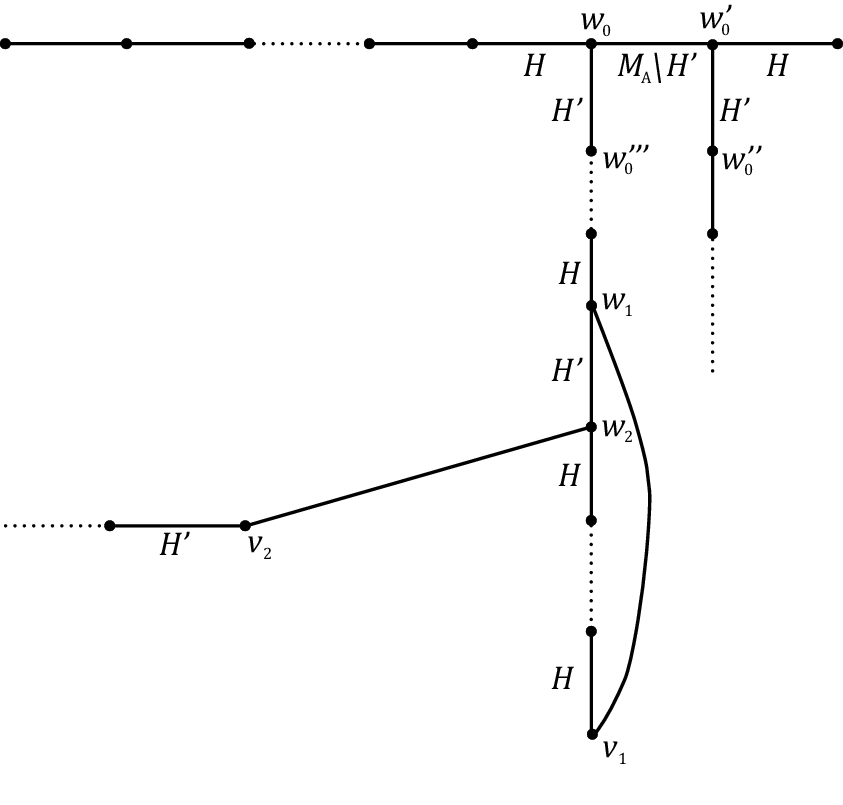}
\caption{Theorem \ref{MainBound}, sub-case D1}
\label{CaseD1Figure}
\end{figure}
\end{center}

Define:

\begin{center}$H_{1}\equiv (H\backslash E(P_{w_{0}v_{1}}))\cup 
(H^{'}\cap (E(P_{w_{0}^{'''}w_{1}})\cup E(P_{v_{1}w_{2}})))\cup \lbrace 
(v_{1},w_{1}),(v_{2},w_{2})\rbrace $,\end{center}

\begin{center}$H_{1}^{'}\equiv (H^{'}\backslash(E(P_{w_{0}v_{1}})\cup 
\lbrace ( w_{0}^{'},w_{0}^{''})\rbrace ))\cup (H\cap 
E(P_{w_{0}v_{1}}))\cup \lbrace (w_{0},w_{0}^{'})\rbrace $.\end{center}

Note that:

\begin{center}$H_{1}\cap H_{1}^{'}=\emptyset ,$\\
\end{center}

\begin{center}$\vert H_{1}\vert +\vert H_{1}^{'}\vert =\vert H\vert 
+\vert H^{'}\vert =\nu _{2}(G),$ \end{center}

\begin{center}$\vert H_{1}\vert=\alpha _{2}(G),$\end{center}

but:

\begin{center}$\vert M\cap (H_{1}\cup H_{1}^{'})\vert >\vert M\cap 
(H\cup H^{'})\vert $,\end{center}

which contradicts the choice of the pair $((H,H^{'}),M)$.

Sub-case D.2: $(w_{1},w_{2})\in H$. 

In this case lemma \ref{EvenPathCase} implies that the vertex $w_{1}$ belongs 
to the subpath of $P_{1}$, which connects the vertices $v_{1}, w_{2}
$. Note that, if $(w_{1},w_{2})\in H\cap M$, then from definition of $
f$ we will have, that $f(v_{1})\ne f(v_{2})$.

Thus, for this case we will consider only the situation, when $
(w_{1},w_{2})\notin M$. We can suppose, that the vertex $v_{2}$ is 
incident to an edge from $H'$. 

As $(w_{1},w_{2})\in H$ and $(w_{1},w_{2})\notin M$, then there is $P'\in P_{o}^{M}(M,H)$, such that $(w_{1},w_{2})\in E(P')$. On the other hand, the end-edge of $P_{1}$
, which is incident to the vertex $v_{1}$, belongs to $H\cap M$, 
therefore the path $P'$ cannot contain the overall subpath $
P_{v_{1},w_{2}}$. As a result, there is $w'\in P_{v_{1},w_{2}}$, for 
which $e_{w'}(H)\in M$ and $e_{w'}(H^{'})\in M$, therefore for the third 
edge $(w^{'},w_{1}^{'})$, which is incident to $w^{'}$, $
(w^{'},w_{1}^{'})\in E(P')$; moreover $
(w^{'},w_{1}^{'})\in M_{A}\backslash H'$ (figure \ref{CaseD2Figure}). 

\begin{center}
\begin{figure}[h]
\centering
\includegraphics[width=12.27cm,height=12.89cm]{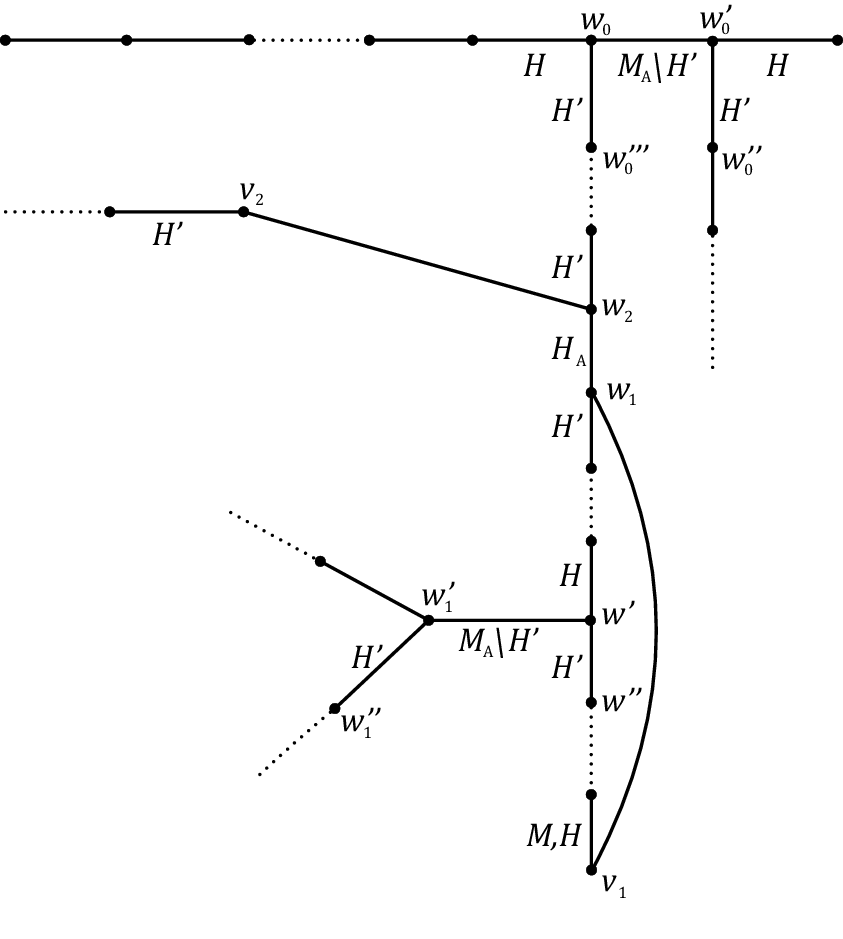}
\caption{Theorem \ref{MainBound}, sub-case D2}
\label{CaseD2Figure}
\end{figure}
\end{center}

Define:

\begin{center}$H_{1}\equiv (H\backslash(E(P_{w^{'}v_{1}})\cup \lbrace 
(w_{1},w_{2})\rbrace ))\cup (H^{'}\cap E(P_{w^{''}v_{1}}))\cup \lbrace 
(v_{1},w_{1}),(v_{2},w_{2})\rbrace $,\end{center}

\begin{center}$H_{1}^{'}\equiv (H^{'}\backslash(E(P_{w^{'}v_{1}})\cup 
\lbrace ( w_{1}^{'},w_{1}^{''})\rbrace ))\cup (H\cap 
E(P_{w^{'}v_{1}}))\cup \lbrace (w^{'},w_{1}^{'})\rbrace $.\end{center}

Note that:

\begin{center}$H_{1}\cap H_{1}^{'}=\emptyset ,$\\
\end{center}

\begin{center}$\vert H_{1}\vert +\vert H_{1}^{'}\vert =\vert H\vert 
+\vert H^{'}\vert =\nu _{2}(G),$ \end{center}

\begin{center}$\vert H_{1}\vert=\alpha _{2}(G),$\end{center}

but:

\begin{center}$\vert M\cap (H_{1}\cup H_{1}^{'})\vert >\vert M\cap 
(H\cup H^{'})\vert $,\end{center}

which contradicts the choice of the pair $((H,H^{'}),M)$.

Thus, the mapping $f$ is injective. \\

Finally, let us denote by $S_{1}=(P_{o}^{H^{'}}(M_{A},H^{'})\cup 
H_{A})\backslash S_{2}$ and show that $\vert S_{1}\vert \ge 3(\nu 
(G)-\alpha _{2}(G))$.

Due to lemma \ref{MANeighbourhood} and the definition of function $f$, 
the edges from $H_{A}$, which belong to $S_{2}$ are not adjacent 
to the edges of $M_{A}\backslash H'$.

On the other hand, all paths from $P_{o}^{M}(M,H)$ can be divided into the following
two sets:

\newcounter{numberedCntD}
\begin{enumerate}
\item Paths which contain only one edge from $M_{A}\backslash H'$.
\item Paths which contain at least two edges from $M_{A}\backslash H'$
.
\setcounter{numberedCntD}{\theenumi}
\end{enumerate}
The paths of type (a) have two edges from $
H_{A}$ and one path $P'$ from $P_{o}^{H^{'}}(M_{A},H^{'})$. The 
length of $P'$ is at least three, therefore it is not from $S_{2}$. If 
there is another path from $P_{o}^{M}(M,H)$, which contains edges 
from $P'$, then it will contain edges of path from $Y(M,H,H')$, 
one of end-edges which and our path are the same, therefore from 
lemma \ref{SecondPath}, we will have, that the second path from $
P_{o}^{M}(M,H)$ belongs to the set (b).

As the paths with type (b) contain at least two 
edges from $M_{A}\backslash H'$, therefore they have at least three 
edges from $H_{A}$.

As $\vert P_{o}^{M}(M,H)\vert =\nu (G)-\alpha 
_{2}(G)$, we get $\vert S_{1}\vert \ge 3(\nu (G)-\alpha _{2}(G))
$. The proof of theorem \ref{MainBound} is completed.
\end{proof}

\begin{remark}
Note that the bound of theorem \ref{MainBound} cannot be improved. This follows from the
example from figure \ref{upper bound example}.
\end{remark}

\begin{center}
\begin{figure}[h]
\centering
\includegraphics[width=17.09cm,height=5.12cm]{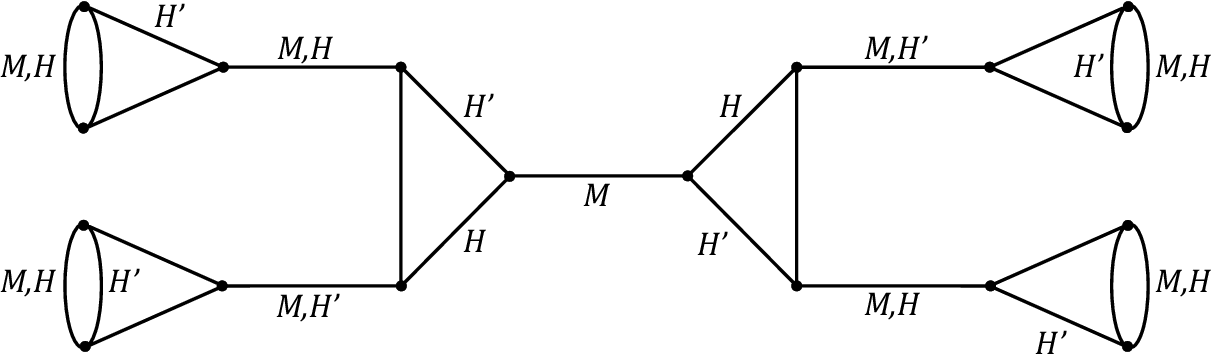}
\caption{A cubic graph $G$ with $\nu (G)=9, \nu _{2}(G)=16,\alpha _{2}(G)=8$}
\label{upper bound example}
\end{figure}
\end{center}

\begin{acknowledgement}We thank our reviewers for their comments that helped us to improve the paper, and particularly for shortening the proof of theorem \ref{nu2nu3}. In the paper we have reproduced one of these proofs. And also, our special thanks to the reviewer who has pointed out a mistake in the proof of theorem \ref{2factorExtension}.
\end{acknowledgement}


\begin{thebibliography}{99}

\bibitem{AlbHaas1} M. Albertson, R. Haas, Parsimonious edge coloring, Discrete Mathematics 148 (1996), pp. 1-7.

\bibitem{AlbHaas2} M. Albertson, R. Haas, The edge chromatic difference sequence of a cubic graph, Discrete Mathematics 177 (1997), pp. 1-8.

\bibitem{HararyPlummer} F. Harary, M.D. Plummer, On the core of a graph,
Proc. London Math. Soc. 17 (1967), pp. 305--314.

\bibitem{Lov} L. Lov\'asz, M.D. Plummer, Matching theory, Ann. Discrete Math.
29 (1986).

\bibitem{Part1} V. V. Mkrtchyan, S. S. Petrosyan, G. N. Vardanyan, On disjoint
matchings in cubic graphs, Discrete Mathematics 310/10-11 (2010), pp. 1588-1613, (available
at: http://arxiv.org/abs/0803.0134)

\bibitem{FiveFourth} V. V. Mkrtchyan, V. L. Musoyan, A. V. Tserunyan, On
edge-disjoint pairs of matchings, Discrete Mathematics 308, (2008),
pp. 5823-5828 (available at: http://arxiv.org/abs/0708.1903).

\bibitem{MPP0-1} V. V. Mkrtchyan, On trees with a maximum proper partial 0-1
coloring containing a maximum matching, Discrete Mathematics 306,
(2006), pp. 456-459.

\bibitem{Perfect} V. V. Mkrtchyan, A note on minimal matching covered
graphs, Discrete Mathematics 306, (2006), pp. 452-455.

\bibitem{Rizzi} R. Rizzi, Approximating the maximum $3$-edge-colorable subgraph problem, Discrete Mathematics 309/12, (2009), pp. 4166-4170.

\bibitem{Measure} E. Steffen, Measurements of edge-uncolorability, Discrete Mathematics 280 (2004), pp. 191 – 214.

\bibitem{FivefourthCharacter} A. V. Tserunyan, Characterization of a class
of graphs related to pairs of disjoint matchings, Discrete
Mathematics 309, (2009), pp. 693-713, (available at:
http://arxiv.org/abs/0712.1014)

\bibitem{West} D. B. West, Introduction to Graph Theory, Prentice-Hall,
Englewood Cliffs, 1996.

\end{thebibliography}
\end{document}